\documentclass[11pt]{amsart}
\usepackage{amsmath, amssymb, amsbsy, amsfonts, amsthm, latexsym, amsopn, amstext, amsxtra, euscript, amscd, color, mathrsfs}
\usepackage{multirow, hyperref}

\usepackage{todonotes}
\setlength\marginparwidth{0.9in}

\usepackage{lipsum}
\usepackage{marginnote}

\makeatother

 \hoffset -1.5cm
 \voffset -1cm
 \textwidth 15.5truecm
\textheight 22.5truecm

\newtheorem{thm}{Theorem}[section]

\newtheorem{cor}[thm]{Corollary}
\newtheorem{prop}[thm]{Proposition}

\newtheorem{rem}[thm]{Remark}

\newtheorem{rmk}[thm]{Remark}

\newtheorem{thm-con}[thm]{Theorem-Conjecture}
\numberwithin{equation}{section}

\theoremstyle{definition}

\def\cA{{\mathcal{A}}}
\def\cB{{\mathcal{B}}}
\def\cG{{\mathcal{G}}}
\def\cL{{\mathcal{L}}}

\def\Tr{{\rm Tr}}

\newcommand{\F}{\mathbb F}

\begin{document}
\title[Boomerang and extended differential uniformity]{A connection between the boomerang uniformity and the extended differential in odd characteristic and applications}

\author[M. Pal]{Mohit Pal}
\address{Department of Informatics, University of Bergen, PB 7803, N-5020, Bergen, Norway}
\email{mohit.pal@uib.no}

\author[P. St\u anic\u a]{Pantelimon St\u anic\u a}
\address{Applied Mathematics Department, Naval Postgraduate School, Monterey, CA 93943, USA}
\email{pstanica@nps.edu}

\keywords{Finite fields, Differential uniformity, Boomerang uniformity.}
\subjclass[2020]{12E20, 11T06, 94A60}
\maketitle

\begin{abstract}
This paper makes the first bridge between the classical differential/boomerang uniformity and the newly introduced $c$-differential uniformity. We show that the boomerang uniformity of an odd APN function is given by the maximum of the entries (except for the first row/column) of the function's $(-1)$-Difference Distribution Table. In fact, the boomerang uniformity of an odd permutation APN function equals its $(-1)$-differential uniformity. We then use this connection to easily compute the boomerang uniformity of several odd APN functions. In the second part we give two classes of differentially low-uniform functions obtained by modifying the inverse function. The first class of permutations (CCZ-inequivalent to the inverse) over a finite field $\F_{p^n}$ ($p$, an odd prime) is obtained from the composition of the inverse function with an order-$3$ cycle permutation, with differential uniformity $3$ if $p=3$ and $n$ is odd; $5$ if $p=13$ and $n$ is even; and $4$ otherwise. The second class is a family of binomials and we show that their differential uniformity equals~$4$. Finally, we extend to odd characteristic a result of Charpin and Kyureghyan (2010) providing an upper bound for the differential uniformity of the function and its switched version via a trace function. 
\end{abstract}

\section{Introduction}
Let $\F_{p^n}$ be the finite field with $p^n$ elements, where $p$ is an odd prime and $n$ is a positive integer. The set of nonzero elements of $\F_{p^n}$ forms a cyclic group with respect to multiplication and we shall denote it by $\F_{p^n}^*$. We shall denote by $\F_{p^n}[X]$, the ring of polynomials in the indeterminate $X$ and coefficients in $\F_{p^n}$. For any $\alpha \in \F_{p^n}$, we shall denote by $\chi(\alpha)$ the quadratic character of $\alpha$ and so, $\chi(\alpha)=0$ if $\alpha=0$, $\chi(\alpha)=1$ if $0\neq \alpha$ is a square, $\chi(\alpha)=-1$ if $\alpha$ is not a square. It is well-known, due to Lagrange's interpolation formula, that any function $f: \F_{p^n} \rightarrow \F_{p^n}$ can be uniquely expressed as a polynomial $f \in \F_{p^n}[X]/(X^{p^n}-X)$. A polynomial $f \in \F_{p^n}[X]$ is called a permutation polynomial (PP) if the induced mapping $c \mapsto f(c)$ permutes the elements of $\F_{p^n}$. The inverse map from $\F_{p^n}$ to itself, given by $X \mapsto X^{p^n-2}$, is an important class of functions due to its applications in coding theory, cryptography, among others. In fact, the inverse function over binary field $\F_{2^8}$ has been used as a substitution box in the block cipher AES, which is one of the most widely used cryptographic primitive.

For any function $f: \F_{p^n} \rightarrow \F_{p^n}$ and $a,b, c \in \F_{p^n}$, the $c$-Difference Distribution Table ($c$-DDT) entry at point $(a,b)$, denoted by ${_c}\Delta_f(a,b)$, is defined as ${_c}\Delta_f(a,b) := \lvert \{ X \in \F_{p^n} \mid f(X+a)-cf(X)=b \} \rvert.$ The $c$-differential uniformity ($c$-DU) of the function $f$, denoted by ${_c}\Delta_f$, is then defined as the maximum of ${_c}\Delta_f(a,b)$, where $a,b \in \F_{p^n}$ and $a\neq 0$ if $c=1$.  When ${_c}\Delta_f=1$ then $f$ is called perfect $c$-nonlinear (P$c$N) function (also known as $c$-planar function~\cite{BT20}) and when ${_c}\Delta_f=2$ then $f$ is called almost perfect $c$-nonlinear (AP$c$N) function. If $c=1$ then the notion of  the $c$-differential uniformity coincides with the classical notion of   differential uniformity. In the particular case when $f(X)=X^d$ for some positive integer $d$ then it is easy to see that ${_c}\Delta_f(a, b) = {_c}\Delta_f \left(1, \frac{b}{a^d} \right)$, for all $a \in \F_{p^n}^*$ and $b \in \F_{p^n}$. Therefore, for determining $c$-differential properties of a power map $f$, it is sufficient to consider the $c$-DDT entries with $a=1$ and $\gcd(d, p^n-1)$. Motivated by the notion of differential spectrum introduced by Blondeau et al.~\cite{BCC10}, Wang et. al~\cite{WZH22} introduced the notion of   the $c$-differential spectrum of power functions. For any power map $f(X)=X^d$ and $0 \leq i \leq {_c}\Delta_f$, let ${_c}\omega_i = \lvert \{b \in \F_{p^n} \mid {_c}\Delta_f(1,b)=i \} \rvert$ then the $c$-differential spectrum of $f$, denoted by ${_c}DS_f$, is defined as ${_c}DS_F := \{ {_c}\omega_i >0 \mid 0 \leq i \leq {_c}\Delta_f \}$. Again, if $c=1$ then the notion of   $c$-differential spectrum coincides with the classical notion of   differential spectrum.

To simplify the analysis of the boomerang attack~\cite{DW99}, which can be thought of as an extension of the differential attack, Cid et al.~\cite{cid18} introduced the notion of  boomerang connectivity table (BCT). In~\cite{cid18}, the BCT entries were defined for permutation functions in even characteristic, and their computation required the  inverse of the permutation.  In 2019, Li et al.~\cite{KLi19} gave an equivalent technique to compute BCT, which does not require the compositional inverse of the permutation polynomial $f(X)$ at all. For any $a, b \in \F_{p^n}$, the BCT entry of the function $f$ at point $(a, b)$, denoted by $\cB_f(a, b)$, is the number of solutions $(X,Y) \in \F_{p^n} \times \F_{p^n}$ of the following system of equations
\begin{equation*}
\begin{cases}
f(X) -f(Y) =b,\\
f(X+a) -f(Y+a)=b.
\end{cases}
\end{equation*}
To quantify the resistance of a function $f$ against the boomerang attack, Boura and Canteaut~\cite{BC18} coined the term boomerang uniformity, denoted by $\cB_f$, which is the maximum of $\cB_f(a,b)$, where $a,b \in \F_{p^n}^*$. In the particular case when $f(X)=X^d$ for some positive integer $d$ then $\cB_f(a,b) = \cB_f\left(1, \frac{b}{a^d} \right)$ for all $a, b \in \F_{p^n}^*$. Thus, for the power maps it is sufficient to consider the BCT entries with $a=1$. Analogous to the differential spectrum, the boomerang spectrum of a power map is defined in the following way. For any power map $f(X)=X^d$ and $0 \leq i \leq \cB_f$, let $v_i = \lvert \{b \in \F_{p^n}^* \mid \cB_f(1,b)=i \} \rvert$ then the boomerang spectrum of $f$, denoted by $BS_f$, is defined as $BS_f= \{ v_i >0 \mid 0 \leq i \leq \cB_f \}$. 

Though, so far, only one application of this new concept of the $c$-DU has been found in design theory~\cite{AKMRS23} (connecting the P$c$N property, when ${_c}\Delta_f=1$, to some quasigroups), we want to point out that in reality, for monomial functions $f(X)=X^d$, the differentials $\{f(\alpha X),f(X)\}$ used by Borissov et al.~\cite{Bor02} in their attack, are the same as the $c$-differentials at $a=0$, $\{f(X+a),cf(X) \}$, where $c=\alpha^d$. Further, in this paper, we shall establish a relation between BCT entries for odd APN functions and their $(-1)$-DDT entries. We then use this relation to derive two identities for the boomerang spectrum of odd APN functions. As an application of this result, we compute the boomerang spectrum of the inverse map from its $(-1)$-differential spectrum. By using our shown bridge connection between the BCT entries and and the $(-1)$-DDT entries, in addition to the inverse function, we compute the boomerang uniformity of four more classes of odd APN functions. It is worth mentioning here that the differential properties~\cite{ EFRST20, HRS99} and boomerang properties~\cite{JLLQ22} of the inverse function over finite fields of odd characteristic are known in the literature. For instance, when $c=0$ then the inverse function being a permutation is P$c$N. In the case of $c=1$, Helleseth et al.~\cite[Theorem 3]{HRS99} showed that $f$ is APN if $\chi(-3)=-1$, differentially $3$-uniform if $p=3$ and differentially $4$-uniform, otherwise. For $c \in \F_{p^n} \backslash \{0,1\}$, the $c$-differential uniformity $f$ has been considered in~\cite[Theorem 13]{EFRST20}, and the authors showed that if $c \in \F_{p^n} \backslash \{0,1, 4, 4^{-1}\}$ then the $c$-DU of $f$ is $3$ if $\chi(c^2-4c)=1$ or $\chi(1-4c)$; and $f$ is AP$c$N in all the remaining cases. Experimental results suggest that for $c \in \{4, 4^{-1}\}$, $f$ is not always AP$c$N. For instance, when $c=4$ then $f$ is differentially $3$-uniform over $\F_{17}$ and when $c=4^{-1}$ then $f$ is differentially $3$-uniform over $\F_{19}$. Here, we give a very simple proof correcting some conditions for the $c$-differential uniformity of $f$ for all the values of $c \in \F_{p^n}\backslash \{0,1\}$.

The second part of the paper is devoted to the construction of differentially low-uniform functions by modifying the inverse function. We know that the inverse map over finite fields of odd characteristic is an involution with three fixed points, namely $0$, $1$ and $-1$. We composed the inverse map with a $3$-length cycle $(0~1~-1)$ and showed that the resulting function $f(X)=X^{p^n-2} \circ (0~1~-1)$, which again is a permutation, has the differential uniformity
\[
\Delta_f=
\begin{cases}
    3~&~\mbox{if}~p=3~\mbox{and}~n~\mbox{is odd},\\
    5~&~\mbox{if}~p=13~\mbox{and}~n~\mbox{is even},\\
    4~&~\mbox{otherwise}.
\end{cases}
\]
In addition to it, we construct a class of differentially $4$-uniform binomials by adding the term $uX^2$ to the inverse map $X^{p^n-2}$. A well-known technique of constructing functions with low differential uniformity from the known ones is Dillon's switching method (see~\cite{Dillon09,EP09}). It was extended   by Budaghyan, Carlet and Leander~\cite{BCL09} (switching the Gold function $X^3$ to produce the APN function $X^3+\Tr(X^9)$), as well as Edel and Pott~\cite{EP09}. Carranza~\cite{Roberto20}  further extended the switching method to any differential uniformity, albeit in even characteristic. Here, we show a similar result in odd characteristic, and in particular, we show that if we switch the inverse map with $\alpha \Tr(h(X))$, where $\alpha \in \F_{p^n}^*$,  $\Tr$ is the absolute trace map and $h(X)\in \F_{p^n}[X]$, then the differential uniformity of the resulting function is bounded above by $2(p+1)$.

\section{Differential and boomerang properties of the inverse function}\label{S3}
In this section, we shall first establish a (perhaps, surprising, though not difficult to show) link between classical differential uniformity, generalized differential uniformity and boomerang uniformity. More precisely, we shall show that the BCT entries of odd APN functions and their $(-1)$-DDT entries have the following relation.
\begin{thm}
\label{TBUCDU}
Let $f$ be an odd APN function over $\F_{p^n}$, where $p$ an odd prime. Then, for any $a,b \in \F_{p^n}^*$ the BCT and the $(-1)$-DDT entries have the following relation
\[
\cB_f(a,b)= {_{-1}}\Delta_f(a,-b).
\]
\end{thm}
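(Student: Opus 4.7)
The plan is to exploit oddness ($f(-X)=-f(X)$) together with the APN property to force every BCT pair into a very specific form, from which the $(-1)$-DDT entry can be read off directly. The first step is a one-line check that the forward difference $h(Z):=f(Z+a)-f(Z)$ inherits the symmetry $h(-Z-a)=f(-Z)-f(-Z-a)=-f(Z)+f(Z+a)=h(Z)$ from the oddness of $f$.

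The main step of substance is extracting the structure of BCT pairs from this symmetry. Since $f$ is APN, for each value $c$ the equation $h(Z)=c$ has at most $2$ solutions, and the symmetry $h(Z)=h(-Z-a)$ then forces the solution set to be exactly $\{Z,-Z-a\}$ (or the singleton $\{-a/2\}$ when $Z=-a/2$). On the other hand, subtracting the two defining BCT equations at $(a,b)$ yields $h(X)=h(Y)$, so any BCT pair with $X\neq Y$ must satisfy $Y=-X-a$. Substituting this back into $f(X)-f(Y)=b$ and applying $f(-X-a)=-f(X+a)$ collapses both BCT equations to the single condition $f(X+a)+f(X)=b$. Hence $\cB_f(a,b)$ counts the $X$ with $f(X+a)+f(X)=b$, which is exactly ${_{-1}}\Delta_f(a,b)$.

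The final step turns ${_{-1}}\Delta_f(a,b)$ into ${_{-1}}\Delta_f(a,-b)$ by a second use of oddness: setting $g(X):=f(X+a)+f(X)$, the identity $g(-X-a)=f(-X)+f(-X-a)=-g(X)$ makes $X\mapsto -X-a$ an involution mapping $\{g=b\}$ bijectively onto $\{g=-b\}$, which gives the stated equality. The one point requiring a small amount of care is the fixed point $X=-a/2$ of this involution, but $g(-a/2)=f(a/2)+f(-a/2)=0$ lies in neither $\{g=b\}$ nor $\{g=-b\}$, and on the BCT side it would force $Y=X$ (contradicting $X\neq Y$); so it contributes $0$ to both counts and causes no trouble.
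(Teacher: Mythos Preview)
Your proof is correct and follows essentially the same approach as the paper: use oddness to obtain the symmetry of the forward derivative, invoke APN to force the two BCT variables to be related by that symmetry, rule out the diagonal since $b\neq 0$, and substitute back to obtain the $(-1)$-DDT equation. The only cosmetic difference is that the paper works with centered variables (replacing $X$ by $X+\tfrac{a}{2}$ and $Y$ by $Y+\tfrac{a}{2}$), so the symmetry becomes $X\mapsto -X$ and the substitution lands directly on $f(Y+\tfrac{a}{2})+f(Y-\tfrac{a}{2})=-b$, i.e.\ ${_{-1}}\Delta_f(a,-b)$, without your extra involution step; your unshifted version first yields ${_{-1}}\Delta_f(a,b)$ and then needs the additional identity ${_{-1}}\Delta_f(a,b)={_{-1}}\Delta_f(a,-b)$, which you prove correctly via $X\mapsto -X-a$ (and this side identity is a small bonus, valid for any odd $f$).
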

\begin{proof}
Recall that a function $f$ over $\F_{p^n}$ is said to be odd if and only if $f(-X)=-f(X)$ for all $X \in \F_{p^n}$. Since $f$ is an odd function, for any fixed $a \in \F_{p^n}^*$ and $b \in \F_{p^n}$ if $X$ is a solution of the equation  
\[
D_f(X, a) :=f\left (X+\frac{a}{2} \right ) - f\left (X-\frac{a}{2} \right ) =b,
\]
so is $-X$. Also since $f$ is APN, if solutions of the above equation exist then these two are the only solutions. Equivalently, for odd APN functions $D_f(X, a) = D_f(Y, a) \iff X = \pm Y$. Now recall that  the boomerang uniformity of the function $f$ is given by the maximum number of solutions of the following system of equations
 \begin{equation}\label{BCDU}
    \begin{cases}
    f\left (X-\frac{a}{2} \right )-f\left (Y-\frac{a}{2} \right ) =b\\
    f\left (X+\frac{a}{2} \right )-f\left (Y+\frac{a}{2} \right )=b
    \end{cases}
    \iff
     \begin{cases}
    f\left (X-\frac{a}{2} \right )-f\left (Y-\frac{a}{2} \right ) =b\\
    D_f (X, a )=D_f (Y, a ),
    \end{cases}
\end{equation}
where $a$ and $b$ are running over $\F_{p^n}^*$. Since $b \in \F_{p^n}^*$, $X=Y$ cannot be a solution of the first equation of the above system. Therefore, the only possibility is $X=-Y$ and in this case the first equation of the above system becomes
\[
f\left (Y+\frac{a}{2} \right )+f\left (Y-\frac{a}{2} \right ) =-b.
\]
Thus, $\cB_f(a,b) = {_{-1}}\Delta_f(a, -b)$ for all $a,b \in \F_{p^n}^*$. This completes the proof.
\end{proof}

\begin{rem}
A consequence of our previous proof is the fact that the boomerang uniformity of an odd function (non necessarily APN) in odd characteristic is influenced by the $(-1)$-differential uniformity. Thus, if one needs a low boomerang uniformity odd function, a necessary condition is that its $(-1)$-differential uniformity must be low. 
\end{rem}

\begin{rmk}
    In Theorem~\textup{\ref{TBUCDU}}, if $f$ is also a permutation, then ${_{-1}}\Delta_f(0,b)=1={_{-1}}\Delta_f(a,0)$, for all $a,b \in \F_{p^n}$. Thus, for odd APN permutations, the boomerang uniformity is equal to its $(-1)$-differential uniformity.
\end{rmk}

It is well-known that the $c$-differential spectrum entries of a power map $f(X)=X^d$ satisfy the following identities
\begin{equation}\label{I1}
    \sum_{i=0}^{{_c}\Delta_f} {_c}\omega_i =p^n~\quad \mbox{and}~\quad \sum_{i=0}^{{_c}\Delta_f} i \cdot {_c}\omega_i =p^n,
\end{equation}
which are useful in the computation of the $c$-differential spectrum. To the best of out knowledge, there are no such identities for the boomerang spectrum entries of a power map. Here, using Theorem~\ref{TBUCDU}, we give similar identities for the boomerang spectrum of odd APN power functions.

\begin{cor}
Let $f$ be an odd APN power function with boomerang uniformity $\cB_f$ and boomerang spectrum $\{v_i >0 ~|~ 0 \leq i \leq \cB_f \}$. Then the following identities hold:
\begin{equation}\label{I2}
    \sum_{i=0}^{\cB_f} v_i = p^n-1 ~\quad \mbox{and}~\quad \sum_{i=0}^{\cB_f} i \cdot v_i = p^n - {_{-1} }\Delta_f(1,0).
\end{equation}
\end{cor}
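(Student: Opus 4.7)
The plan is to reduce both identities to simple counting, using Theorem~\ref{TBUCDU} to translate the boomerang spectrum into a sum over $(-1)$-DDT entries, and then using the standard fact that for a power map the row sum of the $(-1)$-DDT at $a=1$ equals $p^n$.

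\textbf{First identity.} By the definition of the boomerang spectrum, the family of sets $\{ b \in \F_{p^n}^* \mid \cB_f(1,b)=i\}$, for $0 \le i \le \cB_f$, partitions $\F_{p^n}^*$. Summing cardinalities yields $\sum_{i=0}^{\cB_f} v_i = |\F_{p^n}^*| = p^n - 1$.

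\textbf{Second identity.} Rewriting the sum by exchanging the order of counting,
\[
\sum_{i=0}^{\cB_f} i \cdot v_i = \sum_{b \in \F_{p^n}^*} \cB_f(1,b).
\]
Since $f$ is an odd APN power function, Theorem~\ref{TBUCDU} gives $\cB_f(1,b) = {_{-1}}\Delta_f(1,-b)$ for every $b \in \F_{p^n}^*$. The substitution $b \mapsto -b$ is a bijection of $\F_{p^n}^*$, so
\[
\sum_{b \in \F_{p^n}^*} \cB_f(1,b) = \sum_{b \in \F_{p^n}^*} {_{-1}}\Delta_f(1,b) = \left(\sum_{b \in \F_{p^n}} {_{-1}}\Delta_f(1,b)\right) - {_{-1}}\Delta_f(1,0).
\]
Now $\sum_{b \in \F_{p^n}} {_{-1}}\Delta_f(1,b)$ counts the number of pairs $(X,b) \in \F_{p^n}\times \F_{p^n}$ with $f(X+1)+f(X)=b$; since each $X$ determines $b$ uniquely, this count is $p^n$. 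Hence
\[
\sum_{i=0}^{\cB_f} i \cdot v_i = p^n - {_{-1}}\Delta_f(1,0),
\]
as claimed.

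The argument is essentially bookkeeping, so there is no real obstacle; the only point that deserves care is that the identities for the boomerang spectrum differ from the analogues in~(\ref{I1}) by the term accounting for $b=0$ (the first sum loses one because $v_i$ is indexed over $\F_{p^n}^*$, and the second loses exactly ${_{-1}}\Delta_f(1,0)$, the $(-1)$-DDT value at $b=0$, which the bridge cannot see because Theorem~\ref{TBUCDU} is stated only for $a,b \in \F_{p^n}^*$).
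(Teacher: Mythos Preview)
Your proof is correct and is precisely the argument the paper intends: the paper's own proof consists of the single line ``The proof immediately follows from Theorem~\ref{TBUCDU},'' and your write-up simply spells out the bookkeeping behind that sentence (partitioning $\F_{p^n}^*$ for the first identity, and using the bridge $\cB_f(1,b)={_{-1}}\Delta_f(1,-b)$ together with the row-sum identity $\sum_b{_{-1}}\Delta_f(1,b)=p^n$ for the second).
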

\begin{proof}
    The proof immediately follows from Theorem~\ref{TBUCDU}.
\end{proof}

In the remaining of this section, we shall show that Theorem~\ref{TBUCDU} is not only useful in the determination of the boomerang uniformity of odd APN functions but it can also be applied, in part, to determine the BCT entries of other differentially low-uniform odd functions. For instance, the inverse function $f(X)=X^{p^n-2}$ over finite fields $\F_{p^n}$ is APN if $\chi(-3)=-1$, differentially $3$-uniform if $p=3$ and differentially $4$-uniform, otherwise. We shall use Theorem~\ref{TBUCDU} to determine the boomerang spectrum of the inverse map in all these three cases. Before moving forward, we first prove the following theorem which gives a very simple proof correcting some conditions for the $c$-differential uniformity of $f$ for all the values of $c \in \F_{p^n}^*$.

\begin{thm}
\label{TD}
Let $f(X)=X^{p^n-2}$ be a function from $\F_{p^n}$ to itself and $c \in \F_{p^n}^*$. Then the $c$-differential uniformity of $f$ is
\[
\begin{cases}
3&~\mbox{if}~c \neq1, \chi(c^2-4c)=1~\mbox{or}~\chi(1-4c)=1,\\
3&~\mbox{if}~c=1~\mbox{and}~\chi(-3)=0,\\
4&~\mbox{if}~c=1~\mbox{and}~\chi(-3)=1,\\
2&~\mbox{otherwise}.
\end{cases}
\]
\end{thm}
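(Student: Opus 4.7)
The plan is to reduce to $a = 1$ (legitimate since $f(X) = X^{p^n-2}$ is a power map) and to directly count solutions $X \in \F_{p^n}$ of
\[
(X+1)^{p^n-2} - c X^{p^n-2} = b,
\]
then maximize over $b$. I would first peel off the two ``singular'' substitutions: $X = 0$ yields a solution exactly when $b = 1$, and $X = -1$, using $(-1)^{p^n-2} = -1$ (since $p^n - 2$ is odd for odd $p$), yields a solution exactly when $b = c$. For every other $X$, both $X$ and $X + 1$ are invertible, so clearing denominators in $(X+1)^{-1} - c X^{-1} = b$ produces the quadratic
\[
Q_{b,c}(X) := b X^2 + (b + c - 1) X + c = 0,
\]
and since $Q_{b,c}(0) = c \neq 0$ and $Q_{b,c}(-1) = 1 \neq 0$, no double counting occurs: the total count equals the indicator of $b = 1$, plus the indicator of $b = c$, plus the number of roots of $Q_{b,c}$ in $\F_{p^n}$.

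The analysis then splits cleanly by $b$. For $b = 0$, $Q_{0,c}$ is linear with one root if $c \neq 1$ and none if $c = 1$. For $b = 1$ with $c \neq 1$, $Q_{1,c}(X) = X^2 + c X + c$ has discriminant $c^2 - 4c$, so the total count is $1 + (1 + \chi(c^2 - 4c))$, reaching $3$ precisely when $\chi(c^2 - 4c) = 1$. For $b = c$ with $c \neq 1$, $Q_{c,c}(X) = c X^2 + (2c - 1) X + c$ has discriminant $1 - 4c$, again reaching $3$ precisely when $\chi(1 - 4c) = 1$. The remaining degenerate case is $b = c = 1$, which picks up both boundary contributions; here $Q_{1,1}(X) = X^2 + X + 1$ has discriminant $-3$, giving totals $4, 3, 2$ according as $\chi(-3)$ is $1, 0, -1$ (the middle value corresponds to $p = 3$ and a single repeated root). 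For every other $b$, the quadratic alone contributes at most $2$. Assembling these maxima across the two parameter regimes $c = 1$ and $c \neq 1$ reproduces the four-case statement.

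The main obstacle is verifying that the uniformity is \emph{exactly} $2$ (rather than $1$) in the ``otherwise'' branch. For this I would invoke the identity $\sum_{b \in \F_{p^n}} {_c}\Delta_f(1, b) = p^n$ together with the observation that the discriminant $(b + c - 1)^2 - 4 b c$ of $Q_{b,c}$, viewed as a quadratic in $b$, is nontrivial and hence vanishes for at most two values of $b$. Consequently nearly every $b$ contributes $0$ or $2$ solutions, and the sum identity forces at least one $b$ with count exactly $2$, securing the lower bound (any tiny-field exceptions admit direct verification). All remaining work is routine discriminant bookkeeping, tracked by the three character values $\chi(c^2 - 4c)$, $\chi(1 - 4c)$, and $\chi(-3)$.
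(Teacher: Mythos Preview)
Your proposal is correct and follows essentially the same route as the paper's proof: reduce to $a=1$, peel off the two singular inputs, and analyze the resulting quadratic in $X$, with the critical cases being $b=1$, $b=c$, and $b=c=1$. The only cosmetic difference is that the paper centers the difference at $\pm\tfrac12$ (writing $(X+\tfrac12)^{p^n-2}-c(X-\tfrac12)^{p^n-2}=b$), which yields an equivalent quadratic after the shift $X\mapsto X-\tfrac12$. You are in fact slightly more careful than the paper on two points: you explicitly verify $Q_{b,c}(0)=c\neq 0$ and $Q_{b,c}(-1)=1\neq 0$ to rule out double counting, and you sketch a genuine argument for the lower bound $2$ in the ``otherwise'' branch via the sum identity and the observation that the discriminant, as a quadratic in $b$, vanishes for at most two values---the paper simply asserts the exact value without addressing this.
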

\begin{proof}
Recall that the $c$-differential uniformity of $f(X)=X^{p^n-2}$ is given by the maximum number of solutions $X \in \F_{p^n}$ of the following equation 
\begin{equation}\label{T1E1}
    \left (X+\frac{1}{2} \right )^{p^n-2} - c \left (X-\frac{1}{2} \right )^{p^n-2} =b,
\end{equation}
where $b$ is running over $\F_{p^n}$. It is easy to see that if $\displaystyle X=\frac{1}{2}$ then $b=1$ and if $\displaystyle X=-\frac{1}{2}$ then $b=c$. When $\displaystyle X \not \in \left \{-\frac{1}{2}, \frac{1}{2} \right \}$, then Equation~\eqref{T1E1} reduces to 
\begin{equation}\label{T1E2}
  bX^2-(1-c)X+\frac{-b+2c+2}{4}=0.  
\end{equation}
If $b=0$ then we have no solution of Equation~\eqref{T1E2} if $c=1$ and a unique solution, otherwise. In the case when $b \neq 0$ then we have two solutions of Equation~\eqref{T1E2} if and only if
\[
\chi ((1-c)^2-b(-b+2c+2))=1.
\]
Thus, for all $b \not \in \{1,c \}$, we have at most two solutions of Equation~\eqref{T1E1}. It is easy to see that when $b=1$ then we have two solutions of Equation~\eqref{T1E2} if and only if $\chi(c^2-4c)=1.$ Similarly, when $b=c$ then we have two solutions of Equation~\eqref{T1E2} if and only if $\chi (1-4c)=1.$ This completes the proof.
\end{proof}

In~\cite[Theorems 3--6]{JLLQ22}, Jiang et al. determined the boomerang spectrum of the inverse function. In Theorem~\ref{thm:boom_inv}, we give a simpler proof for the boomerang spectrum utilizing the connection between BCT entries and $(-1)$-DDT entries. It is easy to see from Theorem~\ref{TD} that for the inverse function ${_{-1}}\Delta_f(1,0)=1$. Thus, for the APN inverse function (i.e., when $\chi(-3)=-1$), the boomerang spectrum is the same as the $(-1)$-differential spectrum with the only change that $v_1 = {_{-1}}\omega_1 -1$. The next theorem gives the $(-1)$-differential spectrum of the inverse map.
\begin{thm}
Let $f(X)=X^{p^n-2}$ be a function from $\F_{p^n}$ to itself. Then the $(-1)$-differential spectrum of $f$ is given by the following:
\begin{enumerate}
    \item If $p=3$ and $n$ is even, then
    \[
    \left \{{_{-1}}\omega_0=\frac{p^n-1}{2}, {_{-1}}\omega_1=3, {_{-1}}\omega_2=\frac{p^n-9}{2}, {_{-1}}\omega_3=2  \right \}.
    \]
    \item If $p=3$ and $n$ is odd, then
    \[
    \left \{{_{-1}}\omega_0=\frac{p^n-3}{2}, {_{-1}}\omega_1=3, {_{-1}}\omega_2=\frac{p^n-3}{2}  \right \}.
    \]
    \item If $p^n \equiv 3 \pmod 4$ and $\chi(5)=1$, then
    \[
    \left \{{_{-1}}\omega_0=\frac{p^n+1}{2}, {_{-1}}\omega_1=1, {_{-1}}\omega_2=\frac{p^n-7}{2}, {_{-1}}\omega_3=2  \right \}.
    \]
    \item If $p^n \equiv 3 \pmod 4$ and $\chi(5)=-1$, then
    \[
    \left \{{_{-1}}\omega_0=\frac{p^n-3}{2}, {_{-1}}\omega_1=3, {_{-1}}\omega_2=\frac{p^n-3}{2}  \right \}.
    \]
    \item If $p^n \equiv 1 \pmod 4$ and $\chi(5)=1$, then
    \[
    \left \{{_{-1}}\omega_0=\frac{p^n-1}{2}, {_{-1}}\omega_1=3, {_{-1}}\omega_2=\frac{p^n-9}{2}, {_{-1}}\omega_3=2  \right \}.
    \]
    \item If $p^n \equiv 1 \pmod 4$ and $\chi(5)=-1$, then
    \[
    \left \{{_{-1}}\omega_0=\frac{p^n-5}{2}, {_{-1}}\omega_1=5, {_{-1}}\omega_2=\frac{p^n-5}{2}  \right \}.
    \]
    \item If $p=5$, then
    \[
    \left \{{_{-1}}\omega_0=\frac{p^n-1}{2}, {_{-1}}\omega_1=1, {_{-1}}\omega_2=\frac{p^n-1}{2}  \right \}.
    \]
\end{enumerate}
\end{thm}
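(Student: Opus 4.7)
The plan is to compute ${_{-1}}\Delta_f(1,b)$ for every $b \in \F_{p^n}$ by analyzing the equation $(X+1)^{p^n-2} + X^{p^n-2} = b$, and then tally the resulting multiplicities ${_{-1}}\omega_i$. First I would isolate the two inputs at which the inverse map vanishes: $X=0$ forces $b=1$ and $X=-1$ forces $b=-1$, each contributing exactly one solution to its respective $b$. For $X \notin \{0,-1\}$ the equation reads $\frac{1}{X+1} + \frac{1}{X} = b$, which clears to the quadratic
\[
bX^2 + (b-2)X - 1 = 0
\]
with discriminant $(b-2)^2 + 4b = b^2 + 4$. The degenerate case $b=0$ collapses to $-2X = 1$, giving the single solution $X = -\tfrac12$ and hence ${_{-1}}\Delta_f(1,0)=1$. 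Direct substitution shows that neither $X=0$ nor $X=-1$ ever solves the quadratic, so there is no double counting.

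The values $b = \pm 1$ need individual attention since each is already matched to an exceptional $X$. The quadratics $X^2 - X - 1 = 0$ (for $b = 1$) and $X^2 + 3X + 1 = 0$ (for $b = -1$) both have discriminant $5$, so ${_{-1}}\Delta_f(1, \pm 1) = 1 + \nu$, where $\nu$ is $2$, $1$, or $0$ according as $\chi(5)$ equals $1$, $0$, or $-1$. For every other $b$, the count ${_{-1}}\Delta_f(1,b)$ equals $2$, $1$, or $0$ according as $\chi(b^2+4)$ equals $1$, $0$, or $-1$.

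To enumerate these classes I would invoke the standard identity $\sum_{b \in \F_{p^n}} \chi(b^2 + 4) = -1$, valid in odd characteristic since the discriminant $-16$ of the polynomial $b^2+4$ is nonzero. Peeling off the three known contributions $\chi(4) + 2\chi(5) = 1 + 2\chi(5)$ gives $M_1 - M_{-1} = -2 - 2\chi(5)$ on the restricted set $\F_{p^n} \setminus \{0, \pm 1\}$; combined with $M_1 + M_{-1} + M_0 = p^n - 3$, where $M_0$ counts $b$ in this restricted set satisfying $b^2 + 4 = 0$, this pins down $M_{\pm 1}$. Note $M_0 = 2$ when $-1$ is a square in $\F_{p^n}$ \emph{and} $p \neq 5$, and $M_0 = 0$ otherwise, since for $p = 5$ the two roots $\pm 2\sqrt{-1}$ of $b^2 + 4$ coincide with $\pm 1$, which were already excluded. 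The seven listed cases then drop out by substitution; the genuine subtleties are the small-characteristic collapses. For $p = 3$ one has $\chi(5) = \chi(2) = \chi(-1)$, tying $\chi(5)$ to the parity of $n$ and exhibiting cases (1), (2) as the $p = 3$ instances of (5), (4). For $p = 5$ the discriminant $5$ vanishes, making $b = \pm 1$ double roots that migrate to ${_{-1}}\omega_2$. The main obstacle is purely bookkeeping — running through the seven subcases carefully enough not to misplace any of the $\pm 1, \pm 2\sqrt{-1}$ contributions or to conflate the $p = 3$ and $p = 5$ degeneracies.
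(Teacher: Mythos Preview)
Your proposal is correct and follows essentially the same approach as the paper: both reduce the equation $f(X+1)+f(X)=b$ to a quadratic in $X$ (after peeling off the two inputs where the inverse vanishes), both identify the discriminant as $b^{2}+4$, and both single out $b=0$ and $b=\pm 1$ for separate treatment. The only noteworthy difference is in the final tally. The paper determines ${_{-1}}\omega_1$ and ${_{-1}}\omega_3$ directly and then invokes the spectral identities $\sum_i {_{-1}}\omega_i=p^n$ and $\sum_i i\cdot{_{-1}}\omega_i=p^n$ to recover ${_{-1}}\omega_0$ and ${_{-1}}\omega_2$, whereas you compute $M_{1}$ and $M_{-1}$ explicitly via the character sum $\sum_{b}\chi(b^{2}+4)=-1$. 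These two bookkeeping devices are equivalent (the identities are precisely what the character-sum count encodes), so neither route is more general; yours is perhaps slightly more self-contained, while the paper's use of the identities makes the symmetry between ${_{-1}}\omega_0$ and ${_{-1}}\omega_2$ more visible. The paper also works with the symmetric shift $X\pm\tfrac12$ (or $X\pm 1$ when $p=3$) instead of your $X,\,X+1$, but that is only a translation of the variable and changes nothing of substance.
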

\begin{proof} We shall consider two cases, namely, $p=3$ and $p>3$. 

\noindent \textbf{Case 1.} Let $p=3$. In this case, from Theorem~\ref{TD}, we know that the $(-1)$-differential uniformity of $f$ is $3$ if $n$ is even and $2$ if $n$ is odd. Now consider the equation  
\begin{equation}\label{DSE1}
    (X+1)^{p^n-2} + (X-1)^{p^n-2} =b.
\end{equation} 
It is easy to observe that if $b=0$ then $X=0$ is the only solution of the above equation. Also, notice that, if $X=1$ then $b=-1$ and if $X=-1$ then $b=1$. When $X \not \in  \{-1, 1 \}$, then Equation~\eqref{DSE1} reduces to 
\begin{equation}\label{DSE2}
  bX^2+X-b=0.  
\end{equation}
Now, for $b \neq 0$,  Equation~\eqref{DSE2} has a unique solution if $b^2=-1$, which is possible if and only if $n$ is even. Thus, when $n$ is even then ${_{-1}}\omega_1=3$ and also for both $b=\pm 1$, Equation~\eqref{DSE2} has two solutions and hence ${_{-1}}\omega_3=2$. The remaining two entries of the $(-1)$-differential spectrum can be obtained using the identities~\eqref{I1}. Similarly, when $n$ is odd then ${_{-1}}\omega_1=3$ and the remaining two entries of the $(-1)$-differential spectrum can be obtained using the identities~\eqref{I1}.

\noindent \textbf{Case 2.} Let $p>3$. In this case, from Theorem~\ref{TD}, we know that the $(-1)$-differential uniformity of $f$ is $3$ if $\chi(5)=1$ and $2$, otherwise. Now consider the equation  
\begin{equation}\label{DSE3}
     \left (X+\frac{1}{2} \right )^{p^n-2} + \left (X-\frac{1}{2} \right )^{p^n-2} =b.
\end{equation} 
Again, if $b=0$ then $X=0$ is the only solution of Equation~\eqref{DSE3}. Also, it is easy to see that, if $\displaystyle X=\frac{1}{2}$, then $b=1$, and if $\displaystyle X=-\frac{1}{2}$, then $b=-1$. When $\displaystyle X \not \in \left \{-\frac{1}{2}, \frac{1}{2} \right \}$, then Equation~\eqref{DSE3} reduces to 
\begin{equation}\label{DSE4}
   bX^2-2X-\frac{b}{4}=0.  
\end{equation}
For $b \neq 0$, Equation~\eqref{DSE4} has a unique solution if $\chi(b^2+4)=0$, which is possible if and only if $p^n \equiv 1 \pmod 4$. Now, if $p=5$, then for $b = \pm 1$, Equation~\eqref{DSE4} has a unique solution and hence we have ${_{-1}}\omega_1=1$. The remaining two entries of the $(-1)$-differential spectrum can be obtained using the identities~\eqref{I1}. Similarly, if $\chi(5)=1$ then  ${_{-1}}\omega_1=3$ and $ {_{-1}}\omega_3=2$. Likewise, if $\chi(5)=-1$, then ${_{-1}}\omega_1=5$.
 
When $p^n \equiv 3 \pmod 4$, then $ {_{-1}}\omega_1=3$ if $\chi(5)=-1$. If $\chi(5)=1$, then $ {_{-1}}\omega_1=1$ and $ {_{-1}}\omega_3=2$. This completes the proof. 
\end{proof}

\begin{rem}
We note that Items $(1)$ and $(2)$ are special cases of Items $(5)$ and $(4)$, respectively. We prefer to give these cases separately, as they may be of interest.
\end{rem}

In the following theorem we consider the boomerang spectrum of the inverse function in the cases when it is not an APN function, i.e., when $\chi(-3) \in \{0,1\}$. This theorem exhibits how Theorem~\ref{TBUCDU} can be used to determine the boomerang spectrum of odd functions (not necessarily APN). Below, we use the notations
\[
Q_1=\chi \left(\frac{7-\sqrt{-3}}{2} \right )~\quad~\mbox{and}~\quad~Q_2= \chi \left(\frac{7+\sqrt{-3}}{2} \right ),
\]
in the understood finite fields.
\begin{thm}
\label{thm:boom_inv}
Let $f(X)=X^{p^n-2}$ be a function from $\F_{p^n}$ to itself and $\chi(-3) \in \{0,1\}$. 
Then the boomerang spectrum of $f$ is given by:
\begin{itemize}
\item If $p=3$, then
\begin{equation*}
    \begin{split}
        \left \{v_0=\frac{p^n-3}{2}, v_2=\frac{p^n-3}{2}, v_3=2  \right \}~\mbox{if}~n \equiv 1 \pmod 2,\\
        \left \{v_0=\frac{p^n-1}{2}, v_1=2, v_2=\frac{p^n-9}{2}, v_5=2  \right \}~\mbox{if}~n \equiv 0 \pmod 2.\\
    \end{split}
\end{equation*}

\item If $p=13$, then
\begin{equation*}
    \begin{split}
        \left \{v_0=\frac{p^n-9}{2}, v_1=2, v_2=\frac{p^n-1}{2}, v_3=2  \right \}~\mbox{if}~n \equiv 1 \pmod 2,\\
        \left \{v_0=\frac{p^n-1}{2}, v_2=\frac{p^n-13}{2}, v_3=4, v_4=2  \right \}~\mbox{if}~n \equiv 0 \pmod 2.\\
    \end{split}
\end{equation*}

\item If $\chi(-3)=1$, $p\neq 13$ and $\chi(5)=-1$, then
    \begin{enumerate}
    \item If $p^n \equiv 1 \pmod 4$, then 
    \begin{equation*}
    \begin{split}
        \left \{v_0=\frac{p^n-5}{2}, v_1=4, v_2=\frac{p^n-13}{2}, v_4=4  \right \}~\mbox{if}~Q_1=1=Q_2,\\
        \left \{v_0=\frac{p^n-9}{2}, v_1=4, v_2=\frac{p^n-5}{2}, v_4=2  \right \}~\mbox{if}~Q_1Q_2=-1,\\
        \left \{v_0=\frac{p^n-13}{2}, v_1=4, v_2=\frac{p^n+3}{2}  \right \}~\mbox{if}~Q_1=-1=Q_2.\\
    \end{split}
    \end{equation*}

    \item If $p^n \equiv 3 \pmod 4$, then 
    \begin{equation*}
    \begin{split}
        \left \{v_0=\frac{p^n-3}{2}, v_1=2, v_2=\frac{p^n-11}{2}, v_4=4  \right \}~\mbox{if}~Q_1=1=Q_2,\\
        \left \{v_0=\frac{p^n-7}{2}, v_1=2, v_2=\frac{p^n-3}{2}, v_4=2  \right \}~\mbox{if}~Q_1Q_2=-1,\\
        \left \{v_0=\frac{p^n-11}{2}, v_1=2, v_2=\frac{p^n+5}{2} \right \}~\mbox{if}~Q_1=-1=Q_2.\\
    \end{split}
    \end{equation*}
    \end{enumerate}

\item If $\chi(-3)=1$, $p\neq 13$ and $\chi(5)=1$, then
    \begin{enumerate}
    \item If $p^n \equiv 1 \pmod 4$, then 
    \begin{equation*}
    \begin{split}
        \left \{v_0=\frac{p^n-1}{2}, v_1=2, v_2=\frac{p^n-17}{2}, v_3=2 , v_4=4 \right \}~\mbox{if}~Q_1=1=Q_2,\\
        \left \{v_0=\frac{p^n-5}{2}, v_1=2, v_2=\frac{p^n-9}{2}, v_3=2, v_4=2  \right \}~\mbox{if}~Q_1Q_2=-1,\\
        \left \{v_0=\frac{p^n-9}{2}, v_1=2, v_2=\frac{p^n-1}{2}, v_3=2  \right \}~\mbox{if}~Q_1=-1=Q_2.\\
    \end{split}
    \end{equation*}

    \item If $p^n \equiv 3 \pmod 4$, then 
    \begin{equation*}
    \begin{split}
        \left \{v_0=\frac{p^n+1}{2}, v_2=\frac{p^n-15}{2}, v_3=2, v_4=4  \right \}~\mbox{if}~Q_1=1=Q_2,\\
        \left \{v_0=\frac{p^n-3}{2}, v_2=\frac{p^n-7}{2}, v_3=2, v_4=2  \right \}~\mbox{if}~Q_1Q_2=-1,\\
        \left \{v_0=\frac{p^n-7}{2}, v_2=\frac{p^n+1}{2}, v_3=2  \right \}~\mbox{if}~Q_1=-1=Q_2.\\
    \end{split}
    \end{equation*}
    \end{enumerate}
\item Let $\chi(-3)=1$ and $p=5$, which is equivalent to say that $p=5$ and $n$ is even. Then
     \begin{equation*}
    \begin{split}
        \left \{v_0=\frac{p^n-1}{2}, v_2=\frac{p^n-9}{2}, v_4=4  \right \}~\mbox{if}~Q_1=1=Q_2,\\
        \left \{v_0=\frac{p^n-9}{2}, v_2=\frac{p^n+7}{2}  \right \}~\mbox{if}~ Q_1=-1=Q_2.\\
    \end{split}
    \end{equation*}
\end{itemize}
\end{thm}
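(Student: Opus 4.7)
The plan is to extend the reasoning of Theorem~\ref{TBUCDU} to the non-APN setting by carefully tracking the ``extra'' boomerang pairs that arise when $f$ fails to be APN. Using the power-map scaling, it suffices to treat $a = 1$. Writing out the boomerang system for the inverse function and clearing denominators, I would first show that whenever all four of $X \pm \tfrac{1}{2}$ and $Y \pm \tfrac{1}{2}$ are nonzero, a direct manipulation forces $Y = -X$, exactly as in Theorem~\ref{TBUCDU}. These solutions contribute the term ${_{-1}}\Delta_f(1, -b)$ to $\cB_f(1, b)$ via the substitution used in the proof of Theorem~\ref{TBUCDU}.

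Next I would enumerate the degenerate pairs in which at least one of $X \pm \tfrac{1}{2}$ or $Y \pm \tfrac{1}{2}$ vanishes. A brief substitution shows that the sub-case $X = -\tfrac{1}{2}$ produces a new pair if and only if $b^2 + b + 1 = 0$; the symmetries of the boomerang system (swap $X \leftrightarrow Y$, which flips the sign of $b$; and the translation $a/2 \leftrightarrow -a/2$) then yield matching contributions from the remaining three sub-cases, jointly controlled by $b^2 \pm b + 1 = 0$. Assembling these,
\[
\cB_f(1, b) = {_{-1}}\Delta_f(1, -b) + \epsilon(b), \qquad \epsilon(b) = \begin{cases} 2 & \text{if } b^2 + b + 1 = 0 \text{ or } b^2 - b + 1 = 0, \\ 0 & \text{otherwise.} \end{cases}
\]
The hypothesis $\chi(-3) \in \{0,1\}$ is precisely the condition that at least one of these quadratics has roots in $\F_{p^n}$.

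With the correction formula in hand, the theorem reduces to a bookkeeping reorganization of the $(-1)$-differential spectrum from the previous theorem. For each of the (at most four) $b$-values with $\epsilon(b) = 2$, I would determine ${_{-1}}\Delta_f(1, -b)$ by reducing its defining $(-1)$-DDT equation to a quadratic in $X$ and noting that the discriminant, after using $b^2 = -b - 1$ or $b^2 = b - 1$, simplifies to a nonzero scalar multiple of $(7 \mp \sqrt{-3})/2$. This is exactly where $Q_1, Q_2$ enter: depending on which $b$-value one considers, ${_{-1}}\Delta_f(1, -b)$ equals $0$, $1$, or $2$ according as the relevant $Q_j$ is $-1$, $0$, or $1$, and $\cB_f(1, b) \in \{2, 3, 4\}$ correspondingly. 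The boomerang spectrum is then obtained by removing these distinguished $b$'s from the $(-1)$-differential spectrum counts and reassigning each one to the appropriate $v_i$.

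The main obstacle is the careful bookkeeping in the degenerate cases. When $p = 3$, $\sqrt{-3} = 0$ and each of $b^2 \pm b + 1$ has a single double root, so there are only two extras, namely $b = 1$ and $b = -1$; moreover these happen to be the values producing the maximal $(-1)$-DDT entries (landing in ${_{-1}}\omega_3$ for $n$ even and in ${_{-1}}\omega_1$ for $n$ odd), which causes the shift to land in $v_5$ or $v_3$ respectively. When $p = 13$, $(7 + \sqrt{-3})/2 = 0$, so $Q_2 = 0$ and the corresponding discriminant vanishes, forcing ${_{-1}}\Delta_f(1, -b) = 1$ for two of the four extras rather than $0$ or $2$. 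Both exceptional families must be verified by direct computation rather than quoted from the general case. Finally, the entry $v_0$ in each case is recovered from the identity $\sum_i v_i = p^n - 1$ of~\eqref{I2}.
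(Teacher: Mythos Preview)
Your proposal is correct and leads to the same correction formula $\cB_f(1,b)={_{-1}}\Delta_f(1,-b)+\epsilon(b)$ with $\epsilon(b)=2$ precisely on the roots of $b^2\pm b+1$, which is exactly the set $\cA=\{\pm(\sqrt{-3}\pm 1)/2\}$ that the paper isolates; the subsequent bookkeeping against the $(-1)$-differential spectrum is then identical.

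The only real difference is how the ``extra'' pairs are located. The paper rewrites the boomerang system as $D_f(X,1)=D_f(Y,1)$ together with the first equation, and then uses the DDT structure of the inverse: the unique DDT value with more than two preimages (namely $1$, with preimages $\{\pm\tfrac12,\pm\tfrac{\sqrt{-3}}{2}\}$ when $\chi(-3)=1$, and $-1$ with preimages $\{0,\pm1\}$ when $p=3$) is the only source of pairs with $Y\neq\pm X$, and the $4\times4$ (or $3\times3$) table of these preimages produces the list of special $b$'s. You instead clear denominators directly and observe that when all four of $X\pm\tfrac12,\,Y\pm\tfrac12$ are nonzero the two equations subtract to $b(X+Y)=0$, forcing $Y=-X$; the extras then necessarily live in the degenerate cases $X=\pm\tfrac12$ or $Y=\pm\tfrac12$, and a short substitution yields $b^2\pm b+1=0$. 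The two viewpoints match because the non-$\pm Y$ preimages at the bad DDT value always pair a $\pm\tfrac12$ with a $\pm\tfrac{\sqrt{-3}}{2}$, so every extra pair is degenerate in your sense. Your route is a little more self-contained (it does not invoke the DDT of the inverse), while the paper's route makes the mechanism---failure of APN-ness at one value---more transparent and would generalize more readily to other odd functions with a known DDT.
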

\begin{proof}
Recall that the boomerang uniformity of $f$ is given by the maximum number of solutions $(X,Y) \in \F_{p^n} \times \F_{p^n}$ of the following equation
\begin{equation}\label{BE1}
    \begin{cases}
    \left (X-\frac{1}{2} \right )^{p^n-2} - \left (Y-\frac{1}{2} \right )^{p^n-2} =b,\\
    D_f (X,1)= D_f (Y,1),
    \end{cases}
\end{equation}
where $b$ is running over $\F_{p^n}^*$. We shall now consider two cases, namely, $p=3$ and $\chi(-3)=1$, respectively.

\noindent \textbf{Case 1.} Let $p=3$. In this case System~\eqref{BE1} reduces to
\begin{equation}\label{BE2}
    \begin{cases}
    (X+1)^{p^n-2} - (Y+1)^{p^n-2} =b,\\
    (X+1)^{p^n-2} - (X-1)^{p^n-2} = (Y+1)^{p^n-2} - (Y-1)^{p^n-2}.
    \end{cases}
\end{equation}
When $D_f (X,1)= D_f (Y,1)=-1$, then after excluding solutions of the form $X=\pm Y$, we have a total of $4$ solutions of this equation, namely, $\{ (0,1), (0,-1), (1,0), (-1,0) \}$. Moreover, if $(X,Y) \in \{(0,1), (-1,0)\}$ then $b=-1$ and if $(X,Y) \in \{(0,-1), (1,0)\}$ then $b=1$. When $D_f (X,1)= D_f (Y,1) \neq -1$ then its solutions will be of the form $X=\pm Y$. Thus, for all $b \not \in \{0,1, -1 \}$, we have $\cB_f(1, b)={_{-1}}\Delta_f(1,-b)$ and when $b= \pm 1$ then $\cB_f(1, b)={_{-1}}\Delta_f(1,-b)+2$. Recall that 
\[
{_{-1}}\Delta_f(1,\pm 1) =
\begin{cases}
    1~&~\mbox{if}~n~\mbox{is odd},\\
    3~&~\mbox{if}~n~\mbox{is even}.
\end{cases}
\]
Thus, if $n$ is odd then $v_0={_{-1}}\omega_0$, $v_1={_{-1}}\omega_1-3$, $v_2={_{-1}}\omega_2$ and $v_3=2$. Similarly, when $n$ is even then $v_0={_{-1}}\omega_0$, $v_1={_{-1}}\omega_1-1$, $v_2={_{-1}}\omega_2$, $v_3={_{-1}}\omega_3-2$ and $v_5=2$.

\noindent \textbf{Case 2.} Let $\chi(-3)=1$ (which happens if $n$ is even or, $n$ is odd and $p\equiv 1\pmod 3$). In this case, if $D_f (X,1)= D_f (Y,1) \neq 1$ then from Theorem~\ref{TD}, the second equation of System~\eqref{BE1} has solutions $X= \pm Y$. When 
\begin{equation}\label{BE3}
    D_f (X,1)= D_f (Y,1)=1,
\end{equation}
then the solutions $(X,Y)$ of this equation and the corresponding value of $b$ from the first equation of System~\eqref{BE1}, i.e.,
\begin{equation}\label{BE4}
\left (X-\frac{1}{2} \right )^{p^n-2} - \left (Y-\frac{1}{2} \right )^{p^n-2} =b
\end{equation}
are given in Table~\ref{Btable1}. 
\begin{table}
\begin{center}
\begin{tabular}{ |l|c|c|c|c| } 
\hline
 & $Y=\frac{1}{2}$ & $Y=\frac{-1}{2}$ & $Y=\frac{\sqrt{-3}}{2}$ & $Y=\frac{-\sqrt{-3}}{2}$\\[2ex]
\hline
$X=\frac{1}{2}$ & $b=0$ & $b=1$ & $b=\frac{\sqrt{-3}+1}{2}$ & $b=\frac{-\sqrt{-3}+1}{2}$\\[1.5ex]
\hline
$X=\frac{-1}{2}$ & $b=-1$ & $b=0$ & $b=\frac{\sqrt{-3}-1}{2}$ & $b=\frac{-\sqrt{-3}-1}{2}$  \\[1.5ex]
\hline
$X=\frac{\sqrt{-3}}{2}$ & $b=\frac{-\sqrt{-3}-1}{2}$ & $b=\frac{-\sqrt{-3}+1}{2}$ & $b=0$ & $b=-\sqrt{-3}$\\[1.5ex]
\hline
$X=\frac{-\sqrt{-3}}{2}$ & $b=\frac{\sqrt{-3}-1}{2}$ & $b=\frac{\sqrt{-3}+1}{2}$ & $b=\sqrt{-3}$ & $b=0$\\[1.5ex]
\hline
\end{tabular}
\end{center}
\caption{Solutions $(X,Y)$ of Equation~\eqref{BE3} and corresponding values of $b$ from Equation~\eqref{BE4}.} 
\label{Btable1}
\end{table}
It is easy to observe, from Table~\ref{Btable1}, that the solutions $(X,Y)$ corresponding to $b \in\left \{ \pm 1, \pm \sqrt{-3}\right \}$ are of the form $X=-Y$. Thus, \ for all $b \in \F_{p^n}^* \backslash \cA$, where
\[
\cA := \left \{\pm  \left(\frac{\sqrt{-3}-1}{2}\right), \pm  \left(\frac{\sqrt{-3}+1}{2}\right) \right \},
\]
the BCT entries $\cB(1, b)$ are same as the $(-1)$-DDT entries ${_{-1}}\Delta_f(1,-b)$. We shall now consider two subcases, namely, $p=13$ and $p\neq 13$, respectively.\\

\noindent\textbf{Subcase 2.1.} Let $p=13$. In this case $\sqrt{-3}=7$ and hence the set $\cA$ becomes $\{ \pm 3, \pm 4 \}$. Now, we shall consider two cases, namely, $\chi(5)=-1$ and $\chi(5)=1$ which correspond to the cases of $n$ odd and $n$ even, respectively. If $n$ is odd, then for $b= \pm 4$, we have two solutions of Equation~\eqref{BE1} coming from Table~\ref{Btable1} and for $b = \pm 3$, we have three solutions of Equation~\eqref{BE1} and two among them are coming from Table~\ref{Btable1}. Thus, when $p=13$ and $n$ is odd then $v_0= {_{-1}}\omega_0-2$, $v_1= {_{-1}}\omega_1-3$, $v_2= {_{-1}}\omega_2+2$ and $v_3= 2$. Now, when $n$ is even then we have four solutions of Equation~\eqref{BE1} for $b=\pm 4$ and two of these are coming from Table~\ref{Btable1} and we have three solutions of Equation~\eqref{BE1} for $b = \pm 3$ and two of these are coming from Table~\ref{Btable1}. Thus, for $p=13$ and $n$ even, we have $v_0= {_{-1}}\omega_0$, $v_1= {_{-1}}\omega_1-3$, $v_2= {_{-1}}\omega_2-2$, $v_3= {_{-1}}\omega_3+2$ and $v_4=2$.\\

\noindent
\textbf{Subcase 2.2.} Let $p \neq 13$. Again, we shall consider three cases, namely, $\chi(5)=-1, \chi(5)=1$ and $\chi(5)=0$. If $\chi(5)=-1$, we have four solutions corresponding to $b = \pm  \left(\frac{\sqrt{-3}-1}{2}\right)$ if $\chi \left(\frac{7-\sqrt{-3}}{2} \right )=1$ and two solutions, otherwise. Similarly, we have four solutions corresponding to $b = \pm  \left(\frac{\sqrt{-3}+1}{2}\right)$ if $\chi \left(\frac{7+\sqrt{-3}}{2} \right )=1$ and two solutions, otherwise. Thus, if $\chi \left(\frac{7-\sqrt{-3}}{2} \right )=1=\chi \left(\frac{7+\sqrt{-3}}{2} \right )$, then
we have $v_0= {_{-1}\omega}_0$, $v_1= {_{-1}\omega}_1-1$, $v_2= {_{-1}\omega}_2-4$ and $v_4=4$. If $\chi \left(\frac{7-\sqrt{-3}}{2} \right) \cdot  \chi \left(\frac{7+\sqrt{-3}}{2} \right )=-1$, then we have $v_0= {_{-1}\omega}_0-2$, $v_1= {_{-1}\omega}_1-1$, $v_2= {_{-1}\omega}_2$ and $v_4=2$. If $\chi \left(\frac{7-\sqrt{-3}}{2} \right)=-1 = \chi \left(\frac{7+\sqrt{-3}}{2} \right )$, then we have $v_0= {_{-1}\omega}_0-4$, $v_1= {_{-1}\omega}_1-1$, $v_2= {_{-1}\omega}_2+4$.

If $\chi(5)=1$, then we have four solutions corresponding to $b = \pm  \left(\frac{\sqrt{-3}-1}{2}\right)$ if $\chi \left(\frac{7-\sqrt{-3}}{2} \right )=1$ and two solutions, otherwise. Similarly, we have four solutions corresponding to $b = \pm  \left(\frac{\sqrt{-3}+1}{2}\right)$ if $\chi \left(\frac{7+\sqrt{-3}}{2} \right )=1$ and two solutions, otherwise. Thus, if $\chi \left(\frac{7-\sqrt{-3}}{2} \right )=1=\chi \left(\frac{7+\sqrt{-3}}{2} \right )$, then
we have $v_0= {_{-1}\omega}_0$, $v_1= {_{-1}\omega}_1-1$, $v_2= {_{-1}\omega}_2-4$ and $v_4=4$. If $\chi \left(\frac{7-\sqrt{-3}}{2} \right) \cdot  \chi \left(\frac{7+\sqrt{-3}}{2} \right )=-1$, then we have $v_0= {_{-1}\omega}_0-2$, $v_1= {_{-1}\omega}_1-1$, $v_2= {_{-1}\omega}_2$ and $v_4=2$. If $\chi \left(\frac{7-\sqrt{-3}}{2} \right)=-1 = \chi \left(\frac{7+\sqrt{-3}}{2} \right )$, then we have $v_0= {_{-1}\omega}_0-4$, $v_1= {_{-1}\omega}_1-1$, $v_2= {_{-1}\omega}_2+4$.

If $\chi(5)=0$, then this together with the condition $\chi(-3)=1$ implies that $n$ is even. Now similar to the previous cases, we have four solutions corresponding to $b = \pm  \left(\frac{\sqrt{-3}-1}{2}\right)$ if $\chi \left(\frac{7-\sqrt{-3}}{2} \right )=1$ and two solutions, otherwise. Similarly, we have four solutions corresponding to $b = \pm  \left(\frac{\sqrt{-3}+1}{2}\right)$ if $\chi \left(\frac{7+\sqrt{-3}}{2} \right )=1$ and two solutions, otherwise. It is easy to observe that either both or none of $\chi \left(\frac{7-\sqrt{-3}}{2} \right )=1$ and $\chi \left(\frac{7+\sqrt{-3}}{2} \right )=1$, since 
\[
\chi \left(\frac{7-\sqrt{-3}}{2} \right ) \cdot \chi \left(\frac{7+\sqrt{-3}}{2} \right )= \chi(13) = \chi(3)= 1.
\]
Thus, if $\chi \left(\frac{7-\sqrt{-3}}{2} \right )=1=\chi \left(\frac{7+\sqrt{-3}}{2} \right )$, then we have $v_0= {_{-1}\omega}_0$, $v_1= {_{-1}\omega}_1-1$, $v_2= {_{-1}\omega}_2-4$ and $v_4=4$. If $\chi \left(\frac{7-\sqrt{-3}}{2} \right)=-1 = \chi \left(\frac{7+\sqrt{-3}}{2} \right )$, then we have $v_0= {_{-1}\omega}_0-4$, $v_1= {_{-1}\omega}_1-1$, $v_2= {_{-1}\omega}_2+4$. This completes the proof.
\end{proof}

In order to exemplify the usefulness of Theorem~\textup{\ref{TBUCDU}}, we shall now compute the boomerang uniformity of all the known classes of odd APN power functions over finite fields of odd characteristic. To the best of our knowledge, the following functions are  the only known classes of odd APN power maps $X^d$ over $\F_{p^n}$, for $p$ odd:
\begin{itemize}
    \item $f_1(X)=X^3$, $p\neq 3$~\cite{HRS99};
    \item $f_2(X)=X^{\frac{2 p^n - 1}{3}}$, $p^n \equiv 2 \pmod 3$~\cite{HRS99};
    \item $f_3(X)=X^{p^n-2}$, $p^n \equiv 2 \pmod 3$~\cite{HRS99};
    \item $f_4(X)=X^{p^{\frac{n}{2}}+2}$, $n$ even and $p^{\frac{n}{2}} \equiv 1 \pmod 3$~\cite{HRS99};
    \item $f_5(X)=X^{\frac{5^k+1}{2}}$, $p=5$, $\gcd(2n,k)=1$~\cite{HRS99};
    \item $f_6(X)=X^{\frac{5^n-1}{4}+\frac{5^{\frac{n+1}{2}}-1}{2}}$, $n$ odd~\cite{Dob03}.
\end{itemize}
It is easy to verify that the boomerang uniformity of $f_1$ is $3$. We know that under the given condition, the compositional inverse of $f_2$ is $X^3$ and since, in general, a permutation function and its compositional inverse share the same boomerang uniformity (see~\cite[Proposition 2]{BC18}), the boomerang uniformity of $f_2$ is also $3$. We already computed the boomerang uniformity of $f_3$. In the following theorem we shall compute the boomerang uniformity of the functions $f_4, f_5$ and $f_6$.

\begin{thm}
Let $f_4, f_5$ and $f_6$ be the functions on the finite fields $\F_{p^n}$ defined as above. Then $\cB_{f_4} \leq 5$ and $\cB_{f_5}=\cB_{f_6}=3$.
\end{thm}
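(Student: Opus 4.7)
The plan is to invoke Theorem~\ref{TBUCDU}, which requires first verifying that $f_4, f_5, f_6$ are odd. Since $p$ is odd, $p^{n/2}+2$ is odd, so $f_4(-X)=-f_4(X)$; in characteristic $5$, a direct parity check shows that $(5^k+1)/2$ and $(5^n-1)/4+(5^{(n+1)/2}-1)/2$ are both odd, so $f_5$ and $f_6$ are odd as well. Combined with the known APN property of each exponent and the power-map reduction $\cB_f(a,b)=\cB_f(1,b/a^d)$, Theorem~\ref{TBUCDU} reduces the problem to bounding, for $b\in\F_{p^n}^*$, the number of $X\in\F_{p^n}$ satisfying $(X+1)^d+X^d=b$, where $d$ is the exponent of $f_i$; we need at most $5$ solutions for $f_4$ and at most $3$ solutions (with equality achieved) for $f_5$ and $f_6$.

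For $f_4(X)=X^{q+2}$ with $q=p^{n/2}$, the Frobenius identity $(X+1)^q=X^q+1$ rewrites the equation as
\[
Y\,g(X)+(X+1)^2=b, \qquad Y:=X^q,\quad g(X):=2X^2+2X+1.
\]
Setting $h(X):=b-(X+1)^2$, we have $Y=h(X)/g(X)$ whenever $g(X)\neq 0$. Raising $Y^q=X$ to the $q$-th power and substituting back, using the simplification $h+g=b+X^2$, yields the univariate relation
\[
b^q g(X)^2-(b+X^2)^2-X\bigl(2h(X)^2+2h(X)g(X)+g(X)^2\bigr)=0.
\]
A quick computation of leading coefficients shows the degree-$5$ term is $-2X^5$, so the equation has degree exactly $5$ in $X$ and hence at most $5$ roots. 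This gives $\cB_{f_4}\leq 5$.

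For $f_5(X)=X^{(5^k+1)/2}$ with $p=5$ and $\gcd(2n,k)=1$, squaring $(X+1)^d+X^d=b$ and using $2d=5^k+1$ yields
\[
(X+X^{5^k}+1-b^2)^2=4b^2\,X^{5^k+1},
\]
and setting $Y:=X^{5^k}$, this becomes the conic $(X+Y+1-b^2)^2=4b^2 XY$ in $(X,Y)$. Imposing the Frobenius constraint $Y=X^{5^k}$ through a resultant argument---exploiting that $\gcd(2n,k)=1$ forces $X$ and $Y$ to lie in a single Galois orbit---collapses the system to a polynomial of degree $3$ in $X$, giving $\cB_{f_5}\leq 3$; the matching lower bound is produced by exhibiting an explicit $b$ for which the cubic splits completely over $\F_{5^n}$. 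The function $f_6$ is treated analogously: the Dobbertin-type exponent $(5^n-1)/4+(5^{(n+1)/2}-1)/2$ is processed by iterating the characteristic-$5$ Frobenius and reducing modulo $5^n-1$, again producing a polynomial equation in $X$ of degree at most $3$, with the lower bound exhibited by an analogous direct construction.

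The chief obstacle is bookkeeping the spurious roots introduced by each squaring or Frobenius step: every such operation can in principle double the apparent number of solutions, and one must verify that the extraneous roots correspond to either $b\mapsto -b$ or a Galois conjugate of $b$, so that they do not inflate ${_{-1}}\Delta_{f_i}(1,b)$. For $f_6$ in particular, controlling the univariate degree after the iterated Frobenius twists---and ruling out any hidden splitting that would push the bound above $3$---will require a delicate discriminant case analysis in the spirit of the proof of Theorem~\ref{thm:boom_inv} for the inverse map.
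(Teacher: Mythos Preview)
Your treatment of $f_4$ is essentially the paper's argument: both use Dobbertin's multivariate method, setting $Y=X^{q}$ and eliminating to obtain a degree-$5$ univariate equation. Your shift $(X+1)^d+X^d=b$ versus the paper's $(X+\tfrac12)^d+(X-\tfrac12)^d=b$ is immaterial. One minor omission: you divide by $g(X)$ without treating the roots of $g$, but this is easily patched and the paper's version has an analogous special case.

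For $f_5$ and $f_6$, however, there is a genuine gap. Your conic $(X+Y+1-b^2)^2=4b^2XY$ is correct, but the assertion that imposing $Y=X^{5^k}$ ``collapses the system to a polynomial of degree $3$ in $X$'' is not justified and, as stated, is not true: substituting $Y=X^{5^k}$ yields a polynomial of degree $2\cdot 5^k$, and even if the conic factors into two linear forms $\alpha X+\beta Y+\gamma$, each such affine $5$-linearized equation over $\F_{5^n}$ with $\gcd(k,n)=1$ can have up to $5$ solutions, not $3$. Moreover, the squaring step has already merged the solution sets for $b$ and $-b$, so a bound on the squared equation does not directly bound $_{-1}\Delta_{f_5}(1,b)$; your closing paragraph flags this but does not resolve it. The lower bound (``exhibiting an explicit $b$'') and the entire $f_6$ argument are left as sketches.

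The paper sidesteps all of this. For $f_5$ it simply invokes the known computation of the $(-1)$-differential uniformity of $X^{(p^k+1)/2}$ from \cite{MRSYZ}, which under $\gcd(2n,k)=1$ gives exactly $(5^{\gcd(n,k)}+1)/2=3$. For $f_6$ it observes that $f_6$ is a permutation whose compositional inverse is $f_7(X)=X^{(5^m+1)/2}$ with $m=(n+1)/2$; since boomerang uniformity is preserved under inversion, $\cB_{f_6}=\cB_{f_7}$, and the $(-1)$-differential uniformity of $f_7$ is $3$ by \cite{S21}. You should replace your direct attack on $f_5,f_6$ with these two citations (or, if you insist on a self-contained argument, you must actually carry out the linearized-polynomial analysis and control the spurious roots, which is substantially more work than the paper's route).
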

\begin{proof}
We know, from Theorem~\ref{TBUCDU} that the boomerang uniformity of $f_4$ is given by the maximum number of solutions of the following equation
\[
\begin{split}
&\left (X+\frac{1}{2} \right )^{p^m+2} +\left (X-\frac{1}{2} \right )^{p^m+2} =b\\
\iff & 4X^{p^m+2} + X^{p^m} + 2X -2b=0,
\end{split}
\]
where $n=2m$ and $b$ is running over $\F_{p^n}^*$. To analyze its solutions we will be using Dobbertin's multivariate method. Let $Y=X^{p^m}$. The previous equation becomes 
\begin{align*}
&4X^2Y+Y+2X-2b=0,\text{ and raising it to the $p^m$ power},\\
& 4Y^2X+X+2Y-2b^{p^m}=0.
\end{align*}
If $b$ satisfies $4b^2+1=0$, then we get the solution $X=b$. If $4b^2+1\neq 0$, then we find $\displaystyle Y=\frac{2b-2X}{4X^2+1}$ from the first equation and replace it into the second equation arriving to
\begin{align*}
& 16 X^5-16b^{p^m} X^4  -8X^3 -16X^2 (1+b-b^{p^m}) -X(32b+3)+(16b^2+4b-2b^m)=0,
\end{align*}
which has at most five solutions. Thus $\cB_{f_4}\leq 5$ and experimental results for small values of $p$ and $n$ suggest that this bound is attained.

We now consider the boomerang uniformity of $f_5$. It was proved in~\cite[Theorem 6]{MRSYZ} that the $(-1)$-differential uniformity of the function $ X\mapsto X^{\frac{p^k+1}{2}}$ on $\F_{p^n}$ is $1$ if $\displaystyle \frac{2n}{\gcd(2n,k)}=1$, otherwise, it is $\displaystyle \frac{p^{\gcd(n,k)}+1}{2}$. Since $f_5$ on $\F_{5^n}$ is APN for $\gcd(2n,k)=1$, we therefore have that its $(-1)$-differential uniformity is exactly $\displaystyle \frac{5^{\gcd(n,k)}+1}{2}=3$ (since also $\gcd(n,k)=1$).

Finally, we shall consider the boomerang uniformity of $ f_6(X)=
X^{\frac{5^n-1}{4}+\frac{5^m-1}{2}}$, where $m=\frac{n+1}{2}$. Note that $f_6$ is a permutation since $\gcd(d,5^n-1)=1$, and  $\displaystyle \frac{5^m+1}{2} d\equiv 1\pmod {5^n-1}$. Thus, the boomerang uniformity of $f_6$ is equal to the boomerang uniformity (and hence $(-1)$-differential uniformity) of $f_7(X)=X^{\frac{5^m+1}{2}}$. From~\cite[Proposition~5]{S21}, we know that the $(-1)$-differential uniformity of $f_7$ is $3$. This completes the proof. 
\end{proof}

\section{Differentially low-uniform functions by modifying the inverse function}
The differential uniformity of functions over finite fields is preserved under certain transformations. For instance, let $f$ and $g$ are two functions over $\F_{p^n}$ such that $g= A_2 \circ f \circ A_1 +A$, for some affine permutations $A_1, A_2$ over $\F_{p^n}$ and some affine function $A$ over $\F_{p^n}$. Then, $f$ and $g$ have the same differential uniformity and we say that $f$ and $g$ are extended affine (EA) equivalent. The most general equivalence relation, known so far, which preserves the differential uniformity is the Carlet-Charpin-Zinoviev (CCZ) equivalence~\cite{CCZ98}. Two functions $f$ and $g$ over $\F_{p^n}$ are called CCZ-equivalent if there exists an affine permutation $\cA: \F_{p^n} \times \F_{p^n} \rightarrow \F_{p^n} \times \F_{p^n}$ which maps the graph $\cG_f:=(X,f(X))$ to the graph $\cG_g:=(X, g(X))$. Let $\cL$ be the linear part of the affine permutation $\cA$. Then~\cite[Lemma 3.1]{BCV20} shows that the affine permutation $\cA$ simply adds constants to input and output of the CCZ-equivalent function obtained by applying $\cL$. The CCZ-class of a function $f$ always contains the EA-class of the function $f$. It is well-known~\cite{BCP06} that if $f$ is a permutation then the CCZ-class also contains the EA-class of $f^{-1}$, the compositional inverse of the function $f$.

We know that the inverse map over finite fields of odd characteristic is an involution with three fixed points, namely $0,1$ and $-1$. In~\cite[Theorem 3.5]{JKK23}, the authors swapped the images of the inverse function at $0$ and $1$ and determined the $c$-differential uniformity of the function for all $c \in \F_p^*$. One may note that even after swapping the images of $0$ and $1$, this map has a fixed point $-1$. However, if we compose the inverse map by the $3$-length cycles $(0~ 1~ -1)$ or $(0~ -1~ 1)$, then it still remains a permutation with no fixed point. In the following theorem we shall determine the differential uniformity of the modified inverse function $f(X)= X^{p^n-2} \circ (0~ 1~ -1)$. Our results directly follow for the other modified map $f(X)= X^{p^n-2} \circ (0~ -1~ 1)$ as it is the compositional inverse of $f$.

\begin{thm}
Let $p>3$ be a prime number, $n$ be a positive integer and $f(X)= X^{p^n-2} \circ (0~ 1~ -1)$ be a map from $\F_{p^n}$ to itself. Then 
\begin{equation*}
    \begin{cases}
        \Delta_f= 5~&~\mbox{if}~p=13~\mbox{and}~n~\mbox{is even},\\
        \Delta_f \leq 4~&~\mbox{otherwise}.
    \end{cases}
\end{equation*}
\end{thm}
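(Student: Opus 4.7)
The plan is to classify, for each fixed $a\in\F_{p^n}^*$ and $b\in\F_{p^n}$, the solutions $X$ of $f(X+a)-f(X)=b$ according to whether $X,X+a$ lie in the modification set $S=\{0,1,-1\}$, on which $f$ takes the values $f(0)=1,\ f(1)=-1,\ f(-1)=0$; off $S$, $f$ coincides with the inverse $g(X)=X^{p^n-2}$. A \emph{regular} solution ($X,X+a\notin S$) satisfies $(X+a)^{-1}-X^{-1}=b$, equivalently $bX^2+abX+a=0$, which has no root if $b=0$ and at most two roots otherwise. A \emph{special} solution must have $X\in\{0,1,-1,-a,1-a,-1-a\}$, and for each such $X$ the equation forces $b$ to equal a specific rational function of $a$. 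Therefore $\Delta_f(a,b)\leq 2+\mu(a,b)$, where $\mu(a,b)$ is the multiplicity of $b$ among the (at most six) pinned values.

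For $a\in\{\pm 1,\pm 2\}$ the candidate $X$-set collapses (since $S$ and $S-a$ overlap) and a brief direct check confirms $\Delta_f(a,b)\leq 4$. For $a\notin\{0,\pm 1,\pm 2\}$ the six pinned values are
\[
b_1=a^{-1}-1,\ b_2=(1+a)^{-1}+1,\ b_3=(a-1)^{-1},\ b_4=1+a^{-1},\ b_5=(a-1)^{-1}-1,\ b_6=(1+a)^{-1},
\]
corresponding to $X=0,1,-1,-a,1-a,-1-a$, respectively. Of the $\binom{6}{2}=15$ pairwise equalities $b_i=b_j$, six give an impossible identity (for instance $b_1=b_4$ yields $-2=0$), and the remaining nine reduce to one of the eight quadratic conditions
\[
2a^2+2a-1=0,\ a^2-a+1=0,\ a^2+a-1=0,\ a^2=3,\ a^2=2,\ a^2-a-1=0,\ 2a^2-2a-1=0,\ a^2+a+1=0
\]
(with $a^2=3$ arising from two different pairs, though contributing different values of $b$). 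A triple coincidence at a single $b$ requires two of these quadratics to share a common root; eliminating $a^2$ between each compatible pair shows that, apart from forcing an excluded small $a$ or an impossible identity like $1=0$, this happens only when $p=13$, yielding the two triples $b_1=b_2=b_3=9$ at $a=4$ and $b_4=b_5=b_6=4$ at $a=9$.

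Consequently, for $p>3$ and $p\neq 13$, $\mu(a,b)\leq 2$ for all $a,b$, giving $\Delta_f\leq 2+2=4$. For $p=13$ the triples supply three special solutions at $(a,b)\in\{(4,9),(9,4)\}$; the discriminant of the regular quadratic $bX^2+abX+a=0$ at these configurations is, up to a nonzero square factor, equal to $7\in\F_{13}$, which is a nonsquare modulo $13$ but a square in $\F_{13^n}$ exactly when $n$ is even. Two regular solutions therefore appear precisely in that case, giving $\Delta_f=5$, while for $n$ odd no regular solutions exist at these configurations and the $\mu\leq 2$ bound on the other coincidences still yields $\Delta_f\leq 4$. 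The main technical obstacle is the systematic bookkeeping over the fifteen pair coincidences and the elimination step that isolates the $p=13$ exception; once that is done, the small-$a$ verification, the discriminant computation, and the determination of the quadratic character of $7$ modulo $13$ are all elementary.
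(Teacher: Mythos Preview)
Your proposal is correct and follows essentially the same strategy as the paper: split the solutions of $f(X+a)-f(X)=b$ into \emph{regular} ones (both $X,X+a\notin\{0,\pm1\}$, governed by a quadratic in $X$ with at most two roots) and \emph{special} ones (the six candidates $X\in\{0,\pm1,-a,\pm1-a\}$, each pinning a specific value of $b$), dispose of $a\in\{\pm1,\pm2\}$ by direct inspection, and then show via the fifteen pairwise comparisons that at most two of the pinned $b$-values can coincide unless $p=13$, where the triples $b_1=b_2=b_3$ and $b_4=b_5=b_6$ occur at $a=4,9$. The only cosmetic difference is that the paper works with the centred differential $f(X+\tfrac{a}{2})-f(X-\tfrac{a}{2})$, which makes the symmetry $X\mapsto -X$ between the two triples explicit and turns the residual equation into $X^2=5$ rather than your shifted form with reduced discriminant~$7$; since $5$ and $7$ are both nonsquares in $\F_{13}$, the parity-of-$n$ conclusion is identical.
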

\begin{proof}  
We know that the differential uniformity of $f$ is given by the maximum number of solutions of the following equation
\begin{equation} \label{MIE1}
f\left(X +\frac{a}{2} \right) -f\left(X -\frac{a}{2} \right)= b,
\end{equation}
where $a,b \in \F_{p^n}$ and $a \neq 0$. Since $f$ is a permutation, if $b=0$, the above equation has no solutions, for all $a \in \F_{p^n}^*$. Now, we shall consider various cases depending upon the values of $\displaystyle X +\frac{a}{2}$ and $\displaystyle X -\frac{a}{2}$. More precisely, we shall consider the cases when $\displaystyle X -\frac{a}{2} \in \{0,1,-1\}$,  $\displaystyle X +\frac{a}{2} \in \{0,1,-1\}$ and $\displaystyle X \not \in \left \{\pm \frac{a}{2}, 1\pm \frac{a}{2}, -1 \pm \frac{a}{2} \right \}$, respectively.

\noindent \textbf{Case 1.} Let $\displaystyle X = \frac{a}{2}$. In this case, Equation~\eqref{MIE1} reduces to $f(a) -f(0)= b$, and so, $f(a)-1= b$.

\noindent \textbf{Case 2.} If $\displaystyle X = 1+\frac{a}{2}$, then from Equation~\eqref{MIE1}, $f(1+a)-f(1)=b$, that is, $f(1+a)+1= b$.

\noindent \textbf{Case 3.} Let $\displaystyle X = -1+\frac{a}{2}$. In this case, Equation~\eqref{MIE1} reduces to $f(-1+a)-f(-1)=b$, which is $f(-1+a)= b$.

\noindent \textbf{Case 4.} If $\displaystyle X = -\frac{a}{2}$, then from Equation~\eqref{MIE1}, $f(0)-f(-a)=b$, and so  $1-f(-a)= b$.

\noindent \textbf{Case 5.} Let $\displaystyle X = 1-\frac{a}{2}$. In this case, Equation~\eqref{MIE1} reduces to $f(1)-f(1-a)=b$, that is, $-1-f(1-a)= b$.

\noindent \textbf{Case 6.} If $\displaystyle X = -1-\frac{a}{2}$, then from Equation~\eqref{MIE1}, $f(-1)-f(-1-a)= b$, which is $-f(-1-a)= b$. 

\noindent \textbf{Case 7.} Let $\displaystyle X \not \in \left \{\pm \frac{a}{2}, 1\pm \frac{a}{2}, -1 \pm \frac{a}{2} \right \}$. Then Equation~\eqref{MIE1} reduces to 
\begin{equation} \label{MIE2}
\begin{split}
\left(X +\frac{a}{2} \right)^{-1} -\left(X -\frac{a}{2} \right)^{-1}= b \iff  X^2= \frac{a^2}{4} - \frac{a}{b}.
\end{split}
\end{equation}

One may note, from Cases 1--6, that the values of $b$ are in terms of some functions in the variable $a$. 
Now, in order to simplify the solutions $X$ and the corresponding values of $b$ from Cases 1--6, we consider five cases, namely, $a=1$, $a=-1$, $a=2$, $a=-2$ and $a \not \in \{\pm 1, \pm 2 \}$. This discussion is summarized in Table~\ref{MIT1}.
\begin{table}[h!]
\begin{center}
\begin{tabular}{ |c|c|c|c|c|c| } 
\hline
  & $a=1$ &  $a=-1$ & $a=2$ & $a=-2$& $ a\not \in \{ \pm1, \pm2 \}$\\[2ex]
\hline
Case 1 & $(\frac{1}{2}, -2)$ &   $(-\frac{1}{2}, -1)$ & $(1, -\frac{1}{2})$ & $(-1, -\frac{3}{2})$ & $(\frac{a}{2}, \frac{1}{a}-1)$  \\[1.5ex]
\hline
Case 2 & $(\frac{3}{2}, \frac{3}{2})$    & $(\frac{1}{2}, 2)$ & $(2, \frac{4}{3})$ & $(0,1)$ & $(1+\frac{a}{2},\frac{1}{1+a}+1)$  \\[1.5ex]
\hline
Case 3 & $ (-\frac{1}{2}, 1)$ &   $ (-\frac{3}{2}, -\frac{1}{2})$ & $(0,-1)$ & $(-2,-\frac{1}{3})$ & $(-1+\frac{a}{2}, \frac{1}{-1+a})$   \\[1.5ex]
\hline
Case 4 & $(-\frac{1}{2}, 1)$ &   $(\frac{1}{2}, 2)$ & $(-1, \frac{3}{2})$ & $(1, \frac{1}{2})$ &$(-\frac{a}{2},\frac{1}{a}+1)$     \\[1.5ex]
\hline
Case 5 & $(\frac{1}{2}, -2)$ &   $(\frac{3}{2}, -\frac{3}{2})$ & $(0,-1)$ & $(2, -\frac{4}{3})$ &$(1-\frac{a}{2}, \frac{1}{-1+a}-1)$   \\[1.5ex]
\hline
Case 6 & $( -\frac{3}{2}, \frac{1}{2})$ &   $ (-\frac{1}{2}, -1)$ & $(-2, \frac{1}{3})$& $(0,1)$ &$(-1-\frac{a}{2}, \frac{1}{1+a})$    \\[1.5ex]
\hline
Case 7 & $X^2= \frac{1}{4} - \frac{1}{b}$ &   $ X^2= \frac{1}{4} + \frac{1}{b}$ & $X^2= 1 - \frac{2}{b}$& $X^2= 1+ \frac{2}{b}$ & $X^2= \frac{a^2}{4} - \frac{a}{b}$   \\[1.5ex]
\hline
\end{tabular}
\end{center}
\caption{Pairs $(X,b)$ for different choices of $a$.}
\label{MIT1}
\end{table}

Now, we shall use Table~\ref{MIT1} to compute DDT entries for different values of $a$ and $b$. It is easy to observe from the column 2-5 of the Table~\ref{MIT1} that $\Delta_f(a,b) \leq 4$ for all $b \in \F_{p^n}$ and $a \in \{\pm 1, \pm 2 \}$. When $a \not \in \{\pm 1, \pm 2 \}$ then we can infer following from the Table~\ref{MIT1}

\begin{enumerate}
    \item We cannot have solutions from Case~1 and  Case~4 simultaneously as in this case $\frac{1}{a}-1 =\frac{1}{a}+1  \iff -1=1$, which is not possible as $p$ is odd.
    \item We cannot have solutions from Case~2 and Case~6 simultaneously, as $\frac{1}{1+a}+1=\frac{1}{1+a} \iff 1=0$, a contradiction.
    \item We cannot have solutions from Case~3 and Case~5 simultaneously, as $\frac{1}{-1+a}=-1+\frac{1}{-1+a} \iff 0=-1$, a contradiction.
    \item We cannot have solutions from Case~1 and Case~5 simultaneously, as $-1+\frac{1}{a}=-1+\frac{1}{-1+a} \iff 0=-1$, a contradiction.
    \item We cannot have solutions from Case~2 and Case~4 simultaneously, as $\frac{1}{1+a}+1=\frac{1}{a}+1 \iff 0=1$, a contradiction.
    \item We cannot have solutions from Case~3 and Case~6 simultaneously, as $\frac{1}{-1+a}=\frac{1}{1+a} \iff -1=1$, a contradiction.
\end{enumerate}
Thus, we have the following two possible scenarios in which we can get more than two solutions from Cases~1--6:

\begin{itemize}
        \item We now assume that we have solutions from Case~1, Case~2 and Case~3. Then $b=\frac{1}{a}-1 = \frac{1}{1+a}+1 =\frac{1}{-1+a}$. The second and third equalities produce the following system of equations
    \begin{equation*}
        \begin{cases}
            2a^2+2a-1=0,\\
            a^2=3,\\
            a^2-a+1=0.
        \end{cases}
    \end{equation*}
    One can easily verify that the above system of equations is consistent if and only if $p=13$ and $a=4$. Thus, for $p=13$ and $(a,b)=(4,9)$, we have the solutions $X=2$, $X=3$ and $X=1$ of Equation~\eqref{MIE1} from Case~1, Case~2 and Case~3, respectively. Now, for these parameters, the equation in the Case~7 becomes $X^2=5$, which has two solutions if $\chi(5)=1$ and no solutions, otherwise. We know that when $p=13$ then $\chi(5)=1$ if and only if $n$ is even. Thus,
    \[
    \Delta_f(4,9)=
    \begin{cases}
        3&~\mbox{if}~n~\mbox{is odd},\\
        5&~\mbox{if}~n~\mbox{is even}.\\
    \end{cases}
    \]

    \item We now assume that we have solutions from Cases~4--6. Then $b=\frac{1}{a}+1 =\frac{1}{-1+a}-1 = \frac{1}{1+a}$. The second and third equalities gives the following system of equations
    \begin{equation*}
        \begin{cases}
            2a^2-2a-1=0,\\
            a^2=3,\\
            a^2+a+1=0.
        \end{cases}
    \end{equation*}
    One can easily verity that the above system of equations is consistent if and only if $p=13$ and $a=9$. Thus, for $p=13$ and $(a,b)=(9,4)$, we have solutions $X=1$, $X=0$ and $X=-2$ of Equation~\eqref{MIE1} from Case~4, Case~5 and Case~6, respectively. Now, for these parameters, equation in the Case~7 becomes $X^2=5$, which has two solutions if $\chi(5)=1$ and no solutions, otherwise. We know that when $p=13$ then $\chi(5)=1$ if and only if $n$ is even. Thus,
    \[
    \Delta_f(9,4)=
    \begin{cases}
        3&~\mbox{if}~n~\mbox{is odd},\\
        5&~\mbox{if}~n~\mbox{is even}.\\
    \end{cases}
    \]
\end{itemize}
This completes the proof.
\end{proof}

\begin{rmk}
    Since the function $f$ in the above theorem is a permutation, it is CCZ-inequivalent to the inverse function, since for $p>3$ there is no permutation function in the CCZ-class of the inverse function. 
\end{rmk}

Computations revealed  that for some values of $p,n$ (like, $p=3, n=2,4$, or $p=5, n=3$ the differential uniformity of the function in our previous theorem is  $3$ (we could not find examples of uniformity lower than~$3$, though we have not performed extensive computations). We will show 
in our next result that if $p=3$, the differential uniformity of our function is indeed~$3$, for any odd~$n$.   

\begin{thm}
Let $n\geq 1$ be an integer and $f(X)= X^{3^n-2} \circ (0~ 1~ -1)$ be a map from $\F_{3^n}$ to itself. Then 
\begin{equation*}
    \Delta_f =
    \begin{cases}
        3~&~\mbox{if}~n~\mbox{is odd},\\
        4~&~\mbox{if}~n~\mbox{is even}.
    \end{cases}
\end{equation*}
\end{thm}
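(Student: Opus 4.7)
The plan is to adapt the case-analysis of the preceding theorem to characteristic $3$, where $1/2 = -1$ and hence $a/2 = -a$. Writing the differential equation $f(X + a/2) - f(X - a/2) = b$ and partitioning solutions $X$ by whether $X \pm a/2$ lies in the set $\{0, 1, -1\}$ of points on which $(0\ 1\ -1)$ acts, one obtains Cases~$1$--$6$ with prescribed $X \in \{\pm a/2, 1 \pm a/2, -1 \pm a/2\}$, plus a generic Case~$7$ where neither argument is exceptional. In Case~$7$ both evaluations of $f$ reduce to inversion and, using $1/4 = 1$ in characteristic $3$, the equation simplifies to $X^2 = a^2 - a/b$, giving at most two solutions. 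For $a \in \F_{3^n} \setminus \{0, 1, -1\}$, each of the six shifted arguments $\{a, -a, 1 \pm a, -1 \pm a\}$ also avoids $\{0, 1, -1\}$, so Cases~$1$--$6$ produce the six $b$-values
\[
b_1 = \tfrac{1}{a} - 1,\ b_2 = \tfrac{1}{1+a} + 1,\ b_3 = \tfrac{1}{a-1},\ b_4 = \tfrac{1}{a} + 1,\ b_5 = \tfrac{1}{a-1} - 1,\ b_6 = \tfrac{1}{1+a}.
\]

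The core step is to classify all coincidences $b_i(a) = b_j(a)$ in characteristic $3$ by direct computation. The only nontrivial ones (outside $a \in \{0, \pm 1\}$) turn out to be
\[
b_1 = b_6 \Longleftrightarrow a^2 + a + 2 = 0,\qquad b_2 = b_5 \Longleftrightarrow a^2 = -1,\qquad b_3 = b_4 \Longleftrightarrow a^2 - a + 2 = 0,
\]
while the remaining twelve pairs either differ by a nonzero constant or force $a \in \{0, \pm 1\}$. Each of these three quadratics has discriminant $\equiv -7 \equiv -1 \pmod 3$, so it has a root in $\F_{3^n}$ exactly when $-1$ is a square there, i.e., when $n$ is even. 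Moreover, any two of them together force $a \in \{0, \pm 1\}$ (since their pairwise differences are $a + 1$, $a - 1$ and $-a$), so for a fixed admissible $a$ at most one coincidence is possible.

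Assembling the bounds: for $a \in \F_{3^n} \setminus \{0, \pm 1\}$, Cases~$1$--$6$ contribute at most one $X$-value per pair $(a, b)$ when $n$ is odd and at most two when $n$ is even, while Case~$7$ adds at most two more. This yields $\Delta_f(a, b) \leq 3$ when $n$ is odd and $\Delta_f(a, b) \leq 4$ when $n$ is even. The residual values $a = \pm 1$ (for which the six special shifts collapse into $\{0, 1, -1\}$) are handled by direct computation: all three $X \in \{0, 1, -1\}$ produce the same $b$, giving $\Delta_f(1, 1) = 3$ and $\Delta_f(\pm 1, b) \leq 3$ otherwise. This realises the lower bound $\Delta_f \geq 3$ for every $n$, hence $\Delta_f = 3$ when $n$ is odd.

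For the lower bound in the even case, take $a = i$ with $i^2 = -1 \in \F_{3^n}$; then $b_2 = b_5 = i$ produces the two solutions $X = 1 \mp i$ from Cases~$2$ and~$5$, while Case~$7$ becomes $X^2 = -1 - 1 = 1$, contributing the further solutions $X = \pm 1$, distinct from $1 \mp i$. This gives $\Delta_f(i, i) = 4$ and matches the upper bound. The principal technical obstacle is the exhaustive pairwise analysis of the fifteen possible coincidences $b_i = b_j$; however, in characteristic $3$ each such equation reduces to a quadratic in $a$ that either factors as a perfect square (forcing $a \in \{0, \pm 1\}$) or has discriminant $-1$, which is precisely what ties the differential uniformity to the parity of~$n$.
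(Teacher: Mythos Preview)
Your proof is correct and follows essentially the same route as the paper: both arguments partition solutions of the differential equation according to whether $X\pm a/2\in\{0,1,-1\}$, obtain the same six exceptional $b$-values for $a\notin\{0,\pm1\}$ (your cases are a relabelling of the paper's, since the paper writes $f(X-a)-f(X+a)=b$ directly rather than via $a/2=-a$), and reduce the generic case to the quadratic $X^2=a^2-a/b$. Your organisation is somewhat more economical---you tabulate all fifteen pairwise coincidences $b_i=b_j$ at once and observe that exactly three survive for $a\notin\{0,\pm1\}$, each with discriminant $-1$, whereas the paper walks through the cases sequentially and computes $\Delta_f(a,b)$ explicitly in each of the three coincidence scenarios---but the content is the same, and your single witness $\Delta_f(i,i)=4$ for the even-$n$ lower bound matches the paper's Case~2/Case~5 scenario.
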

\begin{proof}  
We know that the differential uniformity of $f$ is given by the maximum number of solutions of the following equation
\begin{equation} \label{MIE31}
f(X-a) -f(X+a)= b,
\end{equation}
where $a,b \in \F_{p^n}$ and $a \neq 0$. Since $f$ is a permutation, if $b=0$ the above equation has no solutions for all $a \in \F_{p^n}^*$. We shall consider various cases depending upon the values of $X+a$ and $X-a$.

\noindent \textbf{Case 1.} Let $X =a$. In this case, Equation~\eqref{MIE31} reduces to $f(0)-f(2a)= b$, so $ 1-f(-a)= b$.

\noindent \textbf{Case 2.} If $X =1+a$, then from Equation~\eqref{MIE31} $f(1)-f(1+2a)=b$, thus $ -1-f(1-a)= b$.

\noindent \textbf{Case 3.} Let $X = -1+a$. In this case, Equation~\eqref{MIE31} reduces to $f(-1)-f(-1+2a)=b$, hence $ -f(-1-a)= b$.

\noindent \textbf{Case 4.} Let $X =-a$. In this case, Equation~\eqref{MIE31} reduces to $f(-2a)-f(0)= b$, thus $ f(a)-1= b$.

\noindent \textbf{Case 5.} If $X =1-a$, then from Equation~\eqref{MIE31} $f(1-2a)-f(1)=b$, so $ f(1+a)+1= b$.

\noindent \textbf{Case 6.} Let $X = -1-a$. In this case, Equation~\eqref{MIE31} reduces to $f(-1-2a)-f(-1)=b$, thus $ f(-1+a)= b$.

\noindent \textbf{Case 7.} Let $X \not \in \{\pm a, 1\pm a, -1 \pm a \}$. Then Equation~\eqref{MIE31} reduces to 
\begin{equation} \label{MIE31_2}
\begin{split}
(X -a)^{-1}-(X +a)^{-1}= b \iff  X^2=a^2 - \frac{a}{b}.
\end{split}
\end{equation}

One may note, from Cases 1--6, that the values of $b$ are in terms of some functions in the variable $a$.  Now, in order to simplify the solutions $X$ and corresponding values of $b$ from Cases 1--6, we consider five cases, namely, $a=1$, $a=-1$ and $a \not \in \{ 1,-1 \}$. This discussion is summarized in Table~\ref{MIT31}.\\
\begin{table}[h!]
\begin{center}
\begin{tabular}{ |c|c|c|c| } 
\hline
  & $a=1$ &  $a=-1$ & $ a\not \in \{ 1, -1 \}$\\[2ex]
\hline
Case 1 & $(1,1)$ &   $(-1, -1)$ & $(a, \frac{1+a}{a})$  \\[1.5ex]
\hline
Case 2 & $(-1,1)$    &  $(0,-1)$ & $(1+a,\frac{1+a}{1-a})$  \\[1.5ex]
\hline
Case 3 & $ (0, 1)$ &   $ (1, -1)$  & $(-1+a, \frac{1}{1+a})$   \\[1.5ex]
\hline
Case 4 & $(-1, 1)$ &   $(1, -1)$  &$(-a,\frac{1-a}{a})$     \\[1.5ex]
\hline
Case 5 & $(0, 1)$ &   $(-1, -1)$ &$(1-a, \frac{-1+a}{1+a})$   \\[1.5ex]
\hline
Case 6 & $( 1,1)$ &   $ (0, -1)$  &$(-1-a, \frac{1}{-1+a})$    \\[1.5ex]
\hline
Case 7 & $X^2= 1 - \frac{1}{b}$ &   $ X^2= 1 + \frac{1}{b}$   & $X^2= a^2 - \frac{a}{b}$   \\[1.5ex]
\hline
\end{tabular}
\end{center}
\caption{Pairs $(X,b)$ for different choices of $a$.}
\label{MIT31}
\end{table}

 It is easy to observe from Table~\eqref{MIT31} that the DDT entries 
 \[
 \Delta_f(1,b)=
 \begin{cases}
    3~&~\mbox{if}~b=1,\\
    2~&~\mbox{if}~b\neq -1~\mbox{and}~\chi(b(b-1))=1,\\
    0~&~\mbox{if}~b\neq -1~\mbox{and}~\chi(b(b-1))=-1.
 \end{cases}
 \]
Similarly,
\[
 \Delta_f(-1,b)=
 \begin{cases}
    3~&~\mbox{if}~b=-1,\\
    2~&~\mbox{if}~b\neq 1~\mbox{and}~\chi(b(b+1))=1,\\
    0~&~\mbox{if}~b\neq 1~\mbox{and}~\chi(b(b+1))=-1.
 \end{cases}
 \]
For $a \not \in \{0,1,-1\}$ we consider the following scenarios:

\begin{enumerate}
    \item We assume that we have a solution $X=a$ from the Case~1. Now, we cannot have solutions from Case~2 as in this case $b=\frac{1+a}{a}= \frac{1+a}{1-a}$ and the second equality would imply that $a=-1$, a contradiction. Also,  we cannot have solutions from Case~3 as the second equality of $b= \frac{1+a}{a}= \frac{1}{1+a}$ implies that $a=1$, a contradiction. Similarly, we cannot have solutions from Case~4, as  then $b=\frac{1+a}{a}= \frac{1-a}{a}$ and the second equality would imply that $a=0$, which again is a contradiction. Likewise, we cannot have solutions from Case~5, since then $b=\frac{1+a}{a}= \frac{-1+a}{1+a}$ and the second equality implies that $1=0$. Let us assume that we have a solution from Case~6, and so, $b=\frac{1+a}{a}= \frac{1}{-1+a}$, so $ b=a$ and $a^2=a+1$, and in this case the equation in Case~7 becomes $X^2=a$. Thus if $b=a$ and $a^2=a+1$, then we have 
    \[
    \Delta(a,b)=
    \begin{cases}
        2~&~\mbox{if}~n~\mbox{is odd},\\
        2~&~\mbox{if}~n~\mbox{is even and}~\chi(a)=-1,\\
        4~&~\mbox{if}~n~\mbox{is even and}~\chi(a)=1.
    \end{cases}
    \]

    \item We now assume that we have a solution $X=1+a$ from Case~2. We have already seen that we cannot have solutions from Case~1. Now, we cannot have solutions from Case~3, as in this case $b=\frac{1+a}{1-a}= \frac{1}{1+a}$ and the second equality would imply that $a=0$, a contradiction. Similarly, we cannot have a solution from Case~4, as in this case $b=\frac{1+a}{1-a} = \frac{1-a}{a}$ and the second equality implies that $a=1$, a contradiction. Likewise, we cannot have a solution from Case~6, as the second equality of $b=\frac{1+a}{1-a}= \frac{1}{-1+a}$ implies that $a=1$, which is a contradiction. Now, let us assume that we have a solution from Case~5 then $b=\frac{1+a}{1-a}=\frac{-1+a}{1+a}$, so $ b=a$ and $a^2=-1$. Notice that for $b=a$ with $a^2=-1$, the equation in Case~7 reduces to $X^2=1$, which always has two solutions $X=\pm 1$. Thus if $b=a$ and $a^2=-1$, then we have 
    \[
    \Delta(a,b)=
    \begin{cases}
        2~&~\mbox{if}~n~\mbox{is odd},\\
        4~&~\mbox{if}~n~\mbox{is even}.
    \end{cases}
    \]

    \item We now assume that we have solutions from Case~3. We have already seen that we cannot have solutions from Case~1 and Case~2. It is easy to verify that if we  have a solution from Case~5, the second equality of $b=\frac{1}{1+a}=\frac{-1+a}{1+a}$ implies that $a=-1$, a contradiction. Similarly, if we  have a solution from Case~6, then $b=\frac{1}{1+a}=\frac{1}{-1+a}$ and the second equality implies that $1=-1$, which again is a contradiction. Now, let us assume that we have a solution from Case~4 then $b=\frac{1}{1+a}=\frac{1-a}{a}$, so $ b= a$ and $a^2=1-a$. It is easy to observe that for $b=a$ and $a^2=1-a$, the equation in Case~7 reduces to $X^2=-a$. Thus, if $b=a$ and $a^2=1-a$, then we have 
    \[
    \Delta(a,b)=
    \begin{cases}
        2~&~\mbox{if}~n~\mbox{is odd},\\
        2~&~\mbox{if}~n~\mbox{is even and}~\chi(-a)=-1,\\
        4~&~\mbox{if}~n~\mbox{is even and}~\chi(-a)=1.
    \end{cases}
    \]
    
    \item We now assume that we have a solution from Case~4. We have already seen that in this case we  cannot have solution from Case~1 and Case~2. We have also discussed the case when we have solutions from Case~4 and Case~3, simultaneously. One may note that if we  have a solution from Case~5, then $b=\frac{1-a}{a}=\frac{-1+a}{1+a}$ and the second equality implies that $a=1$, which is a contradiction. Similarly, we  have solution from Case~6 as the second equality of $b=\frac{1-a}{a}=\frac{1}{-1+a}$ implies that $a=-1$.

    \item We now assume that we have a solution from Case~5. We have already shown that in this case we  have solutions from Cases~1, 3 and 4. Also, we have already discussed the case when we have solutions from Case~5 and Case~2, simultaneously. One can easily verify that if we  have solutions from Case~5 and Case~6 simultaneously, then $\frac{-1+a}{1+a}=\frac{1}{-1+a}$, so $ a=0$, a contradiction.
\end{enumerate}
This completes the proof.
\end{proof}

Constructing low differential uniform functions from known ones is a common theme in many works. In the same research vein, in the following result, we shall show that if we add a monomial term $uX^2$ to the inverse mapping then it still remains differentially $4$-uniform. However by doing so, it is no longer a permutation. Later, we will construct new functions with low differential uniformity, which may remain permutations (and we provide examples of such).

\begin{prop}
Let $f(X)=X^{p^n-2}+uX^2$, where $u \in \F_{p^n}^*$, be a function from $\F_{p^n}$ to itself. Then the differential uniformity of $f$ is $\leq 4$.
\end{prop}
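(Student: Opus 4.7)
The plan is to make the differential equation $f(X+a)-f(X)=b$ explicit and then split the analysis according to whether $X$ is one of the two points where the inverse part of $f$ behaves exceptionally, namely $X=0$ or $X=-a$. Away from these two points, both $X^{p^n-2}$ and $(X+a)^{p^n-2}$ coincide with ordinary multiplicative inverses, so the defining equation becomes
\[
(X+a)^{-1}-X^{-1}+2uaX+ua^2=b,
\]
and after clearing denominators by $X(X+a)$ it reduces to a polynomial equation of degree $3$ in $X$ with leading coefficient $2ua$.

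The first step is to verify that this polynomial is genuinely cubic: since $u,a\in\F_{p^n}^*$ and $p$ is odd, the coefficient $2ua$ is nonzero even when $p=3$ (where it equals $-ua$), so the equation has at most three solutions in $\F_{p^n}$. Next I would substitute $X=0$ and $X=-a$ into this cubic and check that both substitutions yield $-a\neq 0$. This confirms that the solutions produced by the cubic are disjoint from the two exceptional points and hence that no solution is double-counted when the cases are combined.

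It remains to handle the exceptional values individually. A direct computation shows that $X=0$ is a solution of $f(X+a)-f(X)=b$ exactly when $b=a^{-1}+ua^2$, and $X=-a$ is a solution exactly when $b=a^{-1}-ua^2$. These two candidate $b$-values differ by $2ua^2\neq 0$, so for any fixed pair $(a,b)$ at most one of $\{0,-a\}$ actually contributes a solution. Adding at most one exceptional solution to at most three generic solutions yields $\Delta_f(a,b)\leq 4$, uniformly in $a\in\F_{p^n}^*$ and $b\in\F_{p^n}$. The only points requiring care are that the cubic's leading coefficient does not degenerate in small odd characteristic and that the two exceptional $b$-values remain distinct; both follow immediately from $p$ being an odd prime and $u\neq 0$, so I do not expect a serious obstacle beyond bookkeeping the cases cleanly.
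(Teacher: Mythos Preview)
Your argument is correct and follows the same route as the paper's proof: separate the two exceptional inputs of the inverse part, and reduce the remaining case to a cubic in $X$ with nonzero leading coefficient $2ua$. The only cosmetic difference is that the paper centers the derivative at $X\pm a/2$ rather than at $X$ and $X+a$, so its exceptional points are $X=\pm a/2$ instead of $X\in\{0,-a\}$. Your write-up is in fact more complete than the paper's: you explicitly observe that the two exceptional $b$-values $a^{-1}\pm ua^2$ differ by $2ua^2\neq 0$, so at most one of them can occur for a fixed $(a,b)$, and you verify that neither exceptional point is a root of the cubic; the paper leaves both of these points implicit, even though the first is exactly what turns the naive bound of $3+2=5$ into the claimed~$4$.
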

\begin{proof}
We know that the differential uniformity of $f$ is given by the maximum number of solutions of the following equation
\begin{equation} \label{PIE1}
\left(X +\frac{a}{2} \right)^{p^n-2} -\left(X -\frac{a}{2} \right)^{p^n-2} +2au X= b,
\end{equation}
where $a,b \in \F_{p^n}$ and $a \neq 0$. It is straightforward to see that if $\displaystyle X= \frac{a}{2}$ then $b= a^{-1}+a^2u$ and if $\displaystyle X= -\frac{a}{2}$ then $b= a^{-1}-ua^2$. When $\displaystyle X \not \in \{-\frac{a}{2}, \frac{a}{2} \} $ then Equation~\eqref{PIE1} reduces to 
\begin{equation} \label{PIE2}
2auX^3-bX^2-\frac{a^3u}{2}X +\frac{ba^2}{4}-a=0,
\end{equation}
which can have at most $3$ solutions in $\F_{p^n}$. This completes the proof.
\end{proof}

Charpin-Kyureghyan~\cite[Proposition 3]{CK10} showed that the differential uniformity of $G(X)=F(X)+\gamma\Tr(H(X))$ is upper bounded by twice the differential uniformity of~$F$, that is, $\Delta_G\leq 2\Delta_F$, in even characteristic. In the following result, we shall show that a similar result holds for odd characteristic, as well, and in the particular  case of switching the inverse function we obtain a stronger result.

\begin{prop}
Let $f,g$ be defined on $\F_{p^n}$, and  $g(X)=f(X) + \alpha\Tr(h(X))$, where $\alpha\in\F_{p^n}$ and $\Tr: \F_{p^n} \rightarrow \F_p$ is the absolute trace function. Then the differential uniformity of $g$, $\Delta_g\leq p\cdot \Delta_f$. Further, if $f(X)=X^{p^n-2}$, then  $\Delta_g\leq 2(p+1)$.
\end{prop}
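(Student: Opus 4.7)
The plan is to partition the solution set of the differential equation $g(X+a)-g(X)=b$ according to the value of the trace difference. I would write
$$g(X+a)-g(X) = f(X+a)-f(X) + \alpha\bigl(\Tr(h(X+a))-\Tr(h(X))\bigr),$$
and note that $T(X) := \Tr(h(X+a))-\Tr(h(X))$ lies in $\F_p$. For each $c \in \F_p$, any solution $X$ with $T(X)=c$ must satisfy $f(X+a)-f(X) = b - c\alpha$, an equation admitting at most $\Delta_f$ solutions. Summing over the $p$ possible values of $c$ immediately yields $\Delta_g \leq p\cdot\Delta_f$.

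For the sharper bound when $f(X)=X^{p^n-2}$, I would refine the per-$c$ count. If $\alpha=0$, then $g=f$ and $\Delta_g = \Delta_f \leq 4 \leq 2(p+1)$, so assume $\alpha\neq 0$. For $X\in\{0,-a\}$, using the convention $0^{p^n-2}=0$, one computes $f(X+a)-f(X)=a^{-1}$; hence these points contribute solutions only when $c$ equals the unique value $c_0:=(b-a^{-1})/\alpha$ (and only provided $c_0\in\F_p$), contributing at most $2$ solutions in total, and for at most one value of $c$. For $X\notin\{0,-a\}$, one has $f(X+a)-f(X) = -a/(X(X+a))$, so the equation becomes $X(X+a) = -a/(b-c\alpha)$ when $b-c\alpha\neq 0$ (the case $b-c\alpha=0$ yields no solution since $a\neq 0$). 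This is a quadratic in $X$ with at most $2$ solutions per $c$, and the points $X\in\{0,-a\}$ are automatically excluded since they would force the left-hand side to be zero. Summing across all $c \in \F_p$ gives at most $2p$ solutions from the quadratic branch plus at most $2$ from $\{0,-a\}$, so $\Delta_g \leq 2p+2 = 2(p+1)$.

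The hardest point requiring care is the bookkeeping in the inverse-function refinement: one must verify that no solution is double-counted between the special points $\{0,-a\}$ and the quadratic branch (which follows since $X(X+a)=0$ on those points would force $a=0$), and that the degenerate case $b-c\alpha=0$ contributes nothing. Beyond these routine verifications, no substantial obstacle is anticipated, and the argument mirrors the Charpin--Kyureghyan approach in odd characteristic.
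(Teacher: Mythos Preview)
Your proposal is correct and follows essentially the same approach as the paper's proof: partition by the value of the trace difference to get the general bound $\Delta_g\leq p\cdot\Delta_f$, then for the inverse function separate off the two special points and observe that the remaining equation is (at worst) quadratic for each fixed trace value, yielding $2p+2$. The only cosmetic difference is that the paper works with the centered difference $g(X+a/2)-g(X-a/2)$ and phrases the quadratic count via the substitution $Z^{-1}=X^2-a^2/4$, whereas you use the standard difference and solve $X(X+a)=-a/(b-c\alpha)$ directly; your explicit treatment of the degenerate cases $\alpha=0$ and $b-c\alpha=0$ is in fact slightly tidier than the paper's.
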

\begin{proof}
We know that the differential uniformity of $g$ is given by the maximum number of solutions, for $a\neq 0,b\in\F_{p^n}$, of the following equation,  
\begin{equation} \label{PICE1}
\begin{split}
g\left(X +a \right)-g\left(X \right)=f(X+a)-f(X)+\alpha\Tr \left (h(X+a)-h(X) \right) &= b.
\end{split}
\end{equation}
Since $\Tr \left (h(X+a)-h(X) \right)=\epsilon\in\F_p$, an argument similar as the one of~\cite{CK10} shows that
\[
\displaystyle \Delta_g(a,b)\leq \sum_{i=0}^{p-1} \Delta_f(a,b-i\alpha),
\]
from which we can infer that $\Delta_g\leq p \cdot \Delta_f$.

If $f(X)=X^{p^n-2}$ then $f$ is APN, if $p^n\equiv 2\pmod 3$, has differential uniformity $3$ when $p=3$, and $4$ in all other cases. Our argument  below is only better for the inverse than the general one in the case when $f$ is not APN. We write the differential equation slightly differently, namely,
$\displaystyle D_g(X,a):=g\left(X +\frac{a}{2} \right)-\left(X -\frac{a}{2} \right)=b$, more precisely,
\begin{equation} \label{PICE2}
\begin{split}
\left(X +\frac{a}{2} \right)^{p^n-2} -\left(X -\frac{a}{2} \right)^{p^n-2} +\Tr\left(h\left(x+\frac{a}{2}\right)-h\left(x-\frac{a}{2}\right) \right) &= b,\text{ that is}\\
\left(X +\frac{a}{2} \right)^{p^n-2} -\left(X -\frac{a}{2} \right)^{p^n-2} +\alpha\Tr \left (D_h(X,a)\right ) &= b.
\end{split}
\end{equation}
We shall now consider two cases, namely, $\displaystyle X=\pm \frac{a}{2}$  and $\displaystyle X \not \in \left \{ \frac{a}{2}, -\frac{a}{2} \right \}$. Notice that if $\displaystyle X= \pm \frac{a}{2}$, then Equation~\eqref{PICE2} reduces to $a^{p^n-2}+\alpha\Tr \left(D_h\left(\pm \frac{a}{2},a\right) \right) = b.$ When $\displaystyle X \not \in \left \{ \frac{a}{2}, -\frac{a}{2} \right \}$, then  Equation~\eqref{PICE2} reduces to
\begin{equation}\label{CC1}
     \alpha\left(X^2 -\frac{a^2}{4} \right) \Tr \left(D_h\left(X,a\right) \right)= b \left(X^2 -\frac{a^2}{4} \right) +a,
\end{equation}
that is,
\[
\Tr \left(D_h\left(X,a\right) \right)=\frac{b}{\alpha}+\frac{a}{\alpha \left(X^2 -\frac{a^2}{4} \right)}\in\F_p.
\]
Now, let $\displaystyle X^2 -\frac{a^2}{4} := Z^{-1}$. Then using the properties of the trace function, observe that
\[
\left(\frac{aZ+b}{\alpha}\right)^p -\left(\frac{aZ+b}{\alpha}\right) =0.
\]
We conclude that $\displaystyle Z=\frac{\alpha \zeta-b}{a}$, for any $\zeta\in\F_p$. Surely, we cannot claim that the number of solutions for Equation~\eqref{CC1} is precisely $p$, since the above value for $Z$, rendering $2p$ values of $X$ may not all satisfy the original differential equation~\eqref{PICE1} for $g$.
\end{proof}

\begin{rem}
In the above proposition, if $f$ is such that the derivative traces
$\Tr \left(D_f\left(-\frac{a}{2},a\right) \right) \neq \Tr \left(D_f\left(\frac{a}{2},a\right) \right) $, then the bound of the differential uniformity of the switched inverse function becomes $2p+1$.
\end{rem}

While this is not a systematic computation, we can surely do a switching of  the inverse function to obtain permutation polynomials that preserve the differential uniformity of the inverse. In particular, we can find permutation APN functions for some small dimensions, easily. 
We took functions $f$ of the form $f(X,d,s)=X^{p^n-2}+\Tr\left(g^s X^d\right)$ ($g$ is a primitive element of the underlying finite field).
We tabulate below some computational data for small primes and dimensions (we only list the permutation polynomials, where $d\leq p-1, s\leq p-1$, which preserve the differential uniformity (DU) of the inverse function, surely, the case of $(d,s)=(0,0)$; we also removed the trivial cases of $d=0$ and $s>0$).
\begin{center}
\begin{tabular}{ |c|c|c| } 
\hline
$(p,n)$ & $(d,s)$ & DU\\
\hline
$(3,2)$ & $(0,0),(5,1),(7,1),(4,2),(5,2),(7,2)$ & $3$\\
\hline
$(3,3)$ & $(0,0),(13,0), (17,0), (23,0), (25,0), (13,1), (17,1), (23,1), (25,1) $ & $3$\\
\hline
$(5,2)$ & $(0,0), (19,0), (23,0), (6,3), (18,3), (19,3), (23,3), (19,4), (23,4)$ & $4$\\
\hline
$(5,3)$ & $(0,0), (99,0),(119,0), (123,0), (31,1), (62,1),(93,1),(99,1)$& \\
& $ (119,1), (123,1), (99,3), (119,3), (123,3), (99,4),(119,4), (123,4)$ & $2$\ (APN)\\
\hline
$(7,2)$ & $(0,0),(41,0), (47,0), (41,1), (47,1), (41,2), (47,2), (8,4),(16,4),$ & \\
& $(24,4),(32,4),(40,4)(41,4),(47,4), (41,5), (47,5), (41,6), (47,6) $ & $4$\\
\hline
\end{tabular}
\end{center}

 \section{Conclusions}
 
 In this paper  we start by correcting some conditions on the $c$-DU of the inverse function for $c \in \F_q \ \{0,1\}$, and give two identities concerning the boomerang spectrum of a function.
We next show that a necessary condition for a low boomerang uniformity of an odd function is for the $(-1)$-differential uniformity to be low, as well. In fact, in the case of odd APN permutations, they are equal. We apply this result to find the boomerang spectrum of the inverse function and the boomerang uniformity of four other odd APN functions. 
Moreover, we find a new class of differentially $\le 4$-uniform permutations in characteristic $p\neq 13$ (respectively, differentially 5-uniform when $p=13$) that is CCZ-inequivalent to the inverse function. Finally, we provide an upper bound for the differential uniformity of a switched function, thus extending a result of Charpin and Kyureghyan~\cite{CK10} to odd characteristic.

\section*{Acknowledgements}
The research of Mohit Pal is supported by the Research Council of Norway under Grant No. 314395. Pantelimon St\u anic\u a thanks the Selmer Center at the University of Bergen for the invitation to visit, and for the excellent working conditions while this paper was started.

\section*{Declarations}

\textbf{Conflict of interest} The authors declare that they have no conflict of interest regarding the publication of this paper.

\end{document}